\newcommand{\corurl}{red}
\newcommand{\corcite}{ForestGreen}
\newcommand{\corlink}{blue}
\newcommand{\iprod}{\raisebox{8pt}{\scalebox{1}[-2.5]{$\neg\,$}}}
\newcommand{\dd}{\mathrm{d\!\raisebox{-0.05pt}{\scalebox{1}[1.021]{l}}}}
\newcommand{\w}{\wedge\hspace*{-6.77pt}{\raisebox{0.08pt}{\scalebox{.75}[.75]{$\wedge$}}}}
\newcommand{\ww}{\wedge\hspace*{-9.10pt}{\raisebox{0.08pt}{\scalebox{.75}[.75]{$\wedge$\,\,\,}}}}
\newcommand{\bd}{\bm{\mathrm{d}}}
\newcommand{\ee}{\bm{\mathrm{e}}}
\newcommand{\DD}{\bm{\mathrm{D}}}
\newcommand{\FF}{\bm{\mathrm{F}}}
\newcommand{\Aa}{\bm{\mathrm{A}}}
\numberwithin{equation}{section}  
\newtheorem{theorem}{Theorem}[section]
\def\QED{{\boldmath$\rule{0.5em}{0.5em}$}}                                
\def\markatright#1{\leavevmode\unskip\nobreak\quad\hspace*{\fill}{#1}}    
\def\qed{\markatright{\QED}}                                              
\newtheorem{lemma}[theorem]{Lemma}
\newenvironment{proof}[1][Proof]{\noindent\textbf{#1.} }{\qed\\}
\title{Dynamical time and Ashtekar variables for the Husain-Kucha\v{r} model}
\author[1]{J. Fernando Barbero G.}
\author[2,3]{Juan Margalef-Bentabol\,}
\author[4]{Aitor Vicente-Cano\,}
\author[2,3]{Eduardo J.S. Villase\~nor\,}
\affil[1]{Instituto de Estructura de la Materia, IEM-CSIC. Serrano 123, 28006 Madrid, Spain}
\affil[2]{Departamento de Matem\'aticas, Universidad Carlos III de Madrid, Avda.\  de la Universidad 30, 28911 Legan\'es, Spain}
\affil[3]{Grupo de Teor\'{\i}as de Campos y F\'{\i}sica Estad\'{\i}stica. Instituto Gregorio Mill\'an (UC3M). Unidad Asociada al Instituto de Estructura de la Materia, CSIC}
\affil[4]{Departament de F\'{i}sica Qu\`antica i Astrof\'{i}sica, Institut de Ci\`encies del Cosmos,
Universitat de Barcelona, Mart\'{i} i Franqu\`es 1, 08028 Barcelona, Spain}
\date{}                     
\begin{document}
	
\maketitle
\date{September 30, 2024}


\begin{abstract}
The relation between the Husain-Kucha\v{r} model and some extensions thereof that incorporate a dynamical time variable is explored. To this end, we rely on the geometric approach to the Hamiltonian dynamics of singular (constrained) systems provided by the Gotay-Nester-Hinds method (GNH). We also use recent results regarding the derivation of the Ashtekar formulation for Euclidean general relativity without using the time gauge. The insights provided by the geometric approach followed in the paper shed light on several issues related to the dynamics of general relativity and the role of time.
\end{abstract}

\tableofcontents

\medskip
\noindent
{\bf Key Words:}
Husain-Kucha\v{r} model; Self-dual action; Hamiltonian formulation; GNH method.

%
%
\section{Introduction}{\label{sec_intro}}

Some approaches to quantum gravity rely on the Hamiltonian description of General Relativity (GR). Among them, the formulation found by Ashtekar \cite{Ashtekar:1986yd,Ashtekar:1987gu,Ashtekar:2004eh,Perez:2004hj}---upon which Loop Quantum Gravity has been built---displays some very interesting features originating in the geometrical meaning of its canonical variables. In terms of the Ashtekar variables, it is possible to build Hilbert spaces that seem appropriate to accommodate a working theory of quantum gravity. However, some difficulties have to be resolved before the quantization program for GR can be completed in this framework. In particular, the consistent quantization of the Hamiltonian constraint which encodes its non-trivial dynamics.

To disentangle the difficulties specifically associated with the Hamiltonian constraint from those coming from the absence of background spacetime structures (diffeomorphism invariance), some models have been proposed in the past that have the interesting feature of leading to Hamiltonian formulations in the Ashtekar phase space \textit{without} a Hamiltonian constraint. Among them, we would like to highlight the $2+1$ dimensional Husain model \cite{Husain:1992qx}, the Husain-Kucha\v{r} (HK) model \cite{Husain:1990vz} (and some extensions of it with matter \cite{Husain:1993he}) in $3+1$ dimensions and the Extended Husain-Kucha\v{r} (EHK) model presented in \cite{BarberoG:1997nrd}, where a time variable was incorporated in a natural way.

One of the goals of the present paper is to study in depth the dynamics of the EHK model and its relation with the standard one by disentangling the meaning of the Hamiltonian vector fields. The main technical tool that we will use for this purpose is the Gotay-Nester-Hinds (GNH) \cite{Gotay1,Gotay2} approach to the Hamiltonian dynamics of singular systems. The results of \cite{BarberoG:2023qih} suggest that suitable redefinitions of the field variables may provide useful insights into the dynamics. Moreover, they might help resolve some pending issues regarding gauge fixings and the possibility of building non-degenerate metrics, for which the use of a non-vanishing time variable is mandatory.

Although, in many cases, the results found by using the GNH method are equivalent to those obtained by relying on the ideas developed by Dirac for this same purpose \cite{nosdirac}, there are some differences. Some of them are quite obvious; for instance, the Hamiltonian formulations found by using Dirac's ``algorithm'' all live in the full phase space of the mechanical model (or field theory) in question. In contrast, those obtained using the GNH method live directly on the so-called \textit{primary constraint submanifold} $\mathsf{M}_0$. This implies that the symplectic form in Dirac's approach is the canonical one $\Omega$, whereas that relevant for the GNH formulation is the pullback of $\Omega$ onto $\mathsf{M}_0$, which may be degenerate. Despite this last apparent difficulty, the results obtained by using the GNH method can be quite useful and illuminating. For instance, it has been recently shown \cite{BarberoG:2023qih} how the Ashtekar formulation for Euclidean GR can be derived in a clean way from the self-dual action \cite{Samuel:1987td,Jacobson:1987yw} \textit{without relying on any gauge fixing}. This fact is difficult (though not impossible) to discover in Dirac's setting precisely because of the need to look at the pullback of $\Omega$ onto $\mathsf{M}_0$. Our purpose here is to explore what other interesting facts can be discovered for diff-invariant field theories by following the same philosophy and applying the lessons learned to the study of GR and its quantization. From this perspective, and despite the fact that the present paper may appear to be ``too technical'', there is a clear physical motivation behind it: exploit the natural variables naturally suggested by the application of the GNH method to find and understand interesting dynamical features of the dynamics of the models explored here. This is, in fact, a general idea that may be fruitful to study other field theories.

The layout of the paper is the following. After this introduction, we present in Section \ref{sec_Action} the actions, equations of motion and dynamical features of the field theories studied here. Before looking in detail at these models, we devote Section \ref{sec_Dirac_GNH} to a general discussion of the main features of the Dirac and GNH methods, in particular, the differences between them. Subsequently, we will present and discuss the Hamiltonian formulation for the Husain-Kucha\v{r} model in Section \ref{sec_Hamiltonian_HK} and the Extended Husain-Kucha\v{r} model in Section \ref{sec_Hamiltonian_extended}. After that, we will make some comments and discuss the main results in Section \ref{sec_conclusions}. We end the paper with several appendices. In Appendix \ref{appendix_notation}, we explain the notation that we use throughout the paper, Appendix \ref{appendix_identities} compiles several definitions and gives a number of identities needed to complete some of the computations mentioned in the paper, and Appendix \ref{appendix_math_details} contains the proofs of several mathematical results (presented in the form of lemmas) used throughout this work. Finally, Appendix \ref{appendix_details_EHK} provides some details on the implementation of the GNH method for the Extended Husain-Kucha\v{r} model.

%
%

\section{Dirac versus GNH: physical motivations}\label{sec_Dirac_GNH}

Dirac's approach to the Hamiltonian analysis of constrained systems---in particular gauge theories---appears in his classic \textit{Lectures on Quantum Mechanics} \cite{Dirac}. This fact alone gives away his main goal: to provide a procedure to quantize field theories for which there is not a suitable one-to-one correspondence between velocities and canonical momenta. For this purpose, it is very useful to keep the canonical symplectic structure used to define the basic Poisson brackets between the canonical variables $(q,p)$ because this facilitates the search for quantum operators $(\hat{q},\hat{p})$ whose commutators mimic the form of $\{q,p\}$.

Dirac developed a procedure to find a Hamiltonian defined in the full phase space $\Gamma$ (\textit{the total Hamiltonian} $H_T$) and a submanifold of $\Gamma$ where a consistent dynamics is defined by $H_T$. Here \textit{consistent} means that the constraints are preserved by the evolution defined by $H_T$ via Poisson brackets. After the constraints are classified as first or second class, they are quantized either directly in terms of Poisson brackets (if only first-class constraints show up) or with the help of Dirac brackets (if second-class constraints are also present). An important comment has to be made here: although from a classical perspective an important outcome of the method is the total Hamiltonian $H_T$ (with the multipliers introduced in its definition fixed by the consistency of the dynamics), most often Dirac's algorithm is applied as a step towards quantization. Thus, the details of the dynamics defined by $H_T$ often play a secondary role.

The goal of the GNH approach is also to find Hamiltonian formulations for singular Lagrangian systems. At face value, it seems to be quite different from Dirac's approach. Some of the most striking differences that can be gleaned from a first reading of \cite{Gotay1,Gotay2} are its geometric flavor and the emphasis on functional analytic issues when dealing with field theories (treated in a rather cavalier way in \cite{Dirac}). However, just by applying the geometric language of GNH to Dirac's approach (see \cite{nosdirac}), it is possible to see that both methods are actually quite close. For instance, there are no obstructions in principle to incorporate functional issues into Dirac's analysis. Also, dynamical consistency in Dirac's setting can be interpreted in terms of tangency conditions like in the GNH method. In many circumstances, this has a convenient consequence: the use of Poisson brackets (whose definition in terms of functional derivatives can be tricky \cite{BarberoG:2022lji}) can be circumvented.

At the end of the day, the most important difference between both methods is this: whereas the final outcome of Dirac's method is defined on the full phase space $\Gamma$, the Hamiltonian description that one gets in the GNH setting is defined on the primary constraint submanifold $\mathsf{M}_0$ (defined as the image of the fiber derivative $F\!L$ associated with a given Lagrangian $L$). Some comments are in order now. The first is that the relevant symplectic structures in both approaches are different: the canonical one $\Omega$ in Dirac's method and its pullback $\omega$ onto the primary constraint submanifold in the GNH approach. As a consequence, the GNH method seems to be less suited for quantization because $\omega$ is often degenerate and may look quite complicated if written in terms of the original canonical variables. A second comment is that, whereas one can get the GNH result from Dirac's method (pulling back everything onto $\mathsf{M}_0$), there is no compelling reason to do so as far as the main concern is quantization.

This notwithstanding, the GNH description can be quite useful. For instance, it has been shown in \cite{BarberoG:2023qih} that the form of $\omega$ actually suggests how to arrive at the Ashtekar formulation in a very natural way without the need for gauge fixing. One of the physical motivations of the present work is to apply the same procedure to other models to see how the mechanism discussed in \cite{BarberoG:2023qih} works and what can be learned from it. As we show later, there are, in fact, some subtle differences in the way the dynamics of the self-dual and EHK models work, in particular, the behavior of the time variable introduced in the EHK model and that of the 1-form field from which it comes. These will be discussed in the conclusions.

%
%

\section{Actions and equations of motion}\label{sec_Action}

\subsection{The Husain-Kucha\v{r} model}\label{subsec_HK}

Consider a closed (i.e. compact without boundary), orientable, and 3-dimensional manifold $\Sigma$ (hence, parallelizable) and the 4-manifold $\mathcal{M}=\mathbb{R}\times\Sigma$. $\mathcal{M}$ is foliated by the ``spatial sheets'' $\Sigma_\tau:=\{\tau\}\times\Sigma$ ($\tau\in\mathbb{R}$). Also, there is a canonical evolution vector field $\partial_t\in \mathfrak{X}(\mathcal{M})$, transverse to all $\Sigma_\tau$, defined by the tangent vectors to the curves $c_p:\mathbb{R}\rightarrow \mathcal{M}:\tau\mapsto (\tau,p)$. Finally, we have a scalar field $t\in C^\infty(\mathcal{M})$ defined as $t:\mathcal{M}\rightarrow \mathbb{R}:(\tau,p)\mapsto \tau$ and such that ${\partial_t}\iprod\bd t=1$.

We introduce the following geometric objects on $\mathcal{M}$:
\begin{align*}
&{\bm{\mathrm{A}}}^i\in\Omega^1(\mathcal{M})\,,&&i=1,2,3\,,\\
&{\bm{\mathrm{e}}}^i\in\Omega^1(\mathcal{M})\,,&&i=1,2,3\,,
\end{align*}
chosen in such a way that $(\bd t,{\bm{\mathrm{e}}}^i)$ is a coframe. As shown in Lemma \ref{lemma volume form}, this is equivalent to the condition that
\[
\frac{1}{3!}\varepsilon_{ijk}\bd t\wedge{\bm{\mathrm{e}}}^i\wedge{\bm{\mathrm{e}}}^j\wedge{\bm{\mathrm{e}}}^k=:\mathsf{vol}_t
\]
is a volume form. Notice that we have
\begin{equation}\label{formula_vol_1}
  \bd t\wedge {\bm{\mathrm{e}}}^i\wedge {\bm{\mathrm{e}}}^j\wedge {\bm{\mathrm{e}}}^k=\varepsilon^{ijk}\mathsf{vol}_t \,.
\end{equation}

The ${\bm{\mathrm{e}}}^i$ are linearly independent at every point of $\mathcal{M}$ but, as we only have three 1-forms in the 4-dimensional manifold $\mathcal{M}$, by themselves they do not define a coframe.

With the help of the fields introduced above, we define the covariant exterior differential ${\bm{\mathrm{D}}}$, which on the ${\bm{\mathrm{e}}}_i$ acts according to
\[
{\bm{\mathrm{D}}}{\bm{\mathrm{e}}}_i:=\bd{\bm{\mathrm{e}}}_i+\varepsilon_{ijk}{\bm{\mathrm{A}}}^j\wedge {\bm{\mathrm{e}}}^k\,,
\]
(generalized in the usual way to arbitrary differential forms) and the curvature 2-form
\[
{\bm{\mathrm{F}}}_i:=\bd{\bm{\mathrm{A}}}_i+\frac{1}{2}\varepsilon_{ijk}{\bm{\mathrm{A}}}^j\wedge{\bm{\mathrm{A}}}^k\,.
\]
In the previous expressions, $\bd$ denotes the exterior differential on $\mathcal{M}$.

The standard action for the Husain-Kucha\v{r} (HK) model \cite{Husain:1990vz} is
\begin{equation}\label{HK_action}
  S(\mathbf{e},\mathbf{A}) := \frac{1}{2} \int_{\mathcal{M}} \varepsilon_{ijk} \mathbf{e}^i \wedge \mathbf{e}^j \wedge \mathbf{F}^k\,.
\end{equation}
It is both diff-invariant\begin{align}\label{difftransformations}
  \begin{split}
  \delta {\mathbf{A}}^i& =\pounds_{\bm{\xi}}{\mathbf{A}}^i\,,\\
  \delta {\bm{\mathrm{e}}}^i& =\pounds_{\bm{\xi}}{\ee}^i\,,
  \end{split}
\end{align}
where $\bm{\xi}\in\mathfrak{X}(\mathcal{M})$, and invariant under the infinitesimal transformation
\begin{align}\label{SO3transformations}
  \begin{split}
  \delta {\mathbf{A}}^i& = {\bm{\mathrm{D}}}{\bm{\Lambda}}^i\,,\\
  \delta {\bm{\mathrm{e}}}^i& =\varepsilon^i_{\phantom{i}jk}{\ee}^j{\bm{\Lambda}}^k\,,
  \end{split}
\end{align}
where $({\bm{\Lambda}}^k)\in C^\infty(\mathcal{M})^3$ and ${\bm{\mathrm{D}}}{\bm{\Lambda}}^i:=\bm{\mathrm{d}}{\bm{\Lambda}}^i+\varepsilon^{ijk}{\bm{\mathrm{A}}}_j{\bm{\Lambda}}_k$. This allows us to interpret $i,j,k=1,2,3$ as ``internal $SO(3)$ indices''. In fact, they can be understood as abstract indices over the $SO(3)$ bundle (with the Killing metric), in which case $\varepsilon^i_{\phantom{i}jk}$ can be understood (with a bit of effort) as the structure constants of a graded Lie algebra with values in the space of forms. Finally, notice also that, using covariant phase space methods \cite{CPS}, we can prove that both infinitesimal symmetries \eqref{difftransformations} and \eqref{SO3transformations} are, in fact, gauge symmetries.

The field equations corresponding to the action \eqref{HK_action} are
\begin{align}\label{field_equations_HK}
\begin{split}
   & \varepsilon_{ijk}\bm{\mathrm{e}}^j\wedge {\bm{\mathrm{D}}} \bm{\mathrm{e}}^k=0\,, \\
   & \varepsilon_{ijk}\bm{\mathrm{e}}^j\wedge \bm{\mathrm{F}}^k=0\,.
   \end{split}
\end{align}
With the help of the volume form $\mathsf{vol}_t $ in $\mathcal{M}$ let us define
\begin{equation}\label{def_u}
\tilde{u}_0(\cdot):=\frac{1}{3!}\left(\frac{\cdot\wedge \varepsilon_{ijk}\bm{\mathrm{e}}^i\wedge \bm{\mathrm{e}}^j\wedge\bm{\mathrm{e}}^k}{\mathsf{vol}_t }\right)\,.
\end{equation}
At each point $p\in \mathcal{M}$, this is an element of $(T_p\mathcal{M})^{**}$ (i.e. a linear map from $(T_p\mathcal{M})^*$ to $\mathbb{R}$) and determines in a unique way a vector in $T_p\mathcal{M}$. By gathering these vectors for all $p$, we get a vector field $\tilde{u}_0\in\mathfrak{X}(\mathcal{M})$. It is straightforward to see that $\tilde{u}_0\iprod\bm{\mathrm{e}}^i=\tilde{u}_0(\bm{\mathrm{e}}^i)=0$ for $i=1,2,3$, and $\tilde{u}_0\iprod{\bm{\mathrm{d}}}t=1$.

\medskip

Let ${\bm{\xi}}^k\in\Omega^2(\mathcal{M})$, then, as shown in Lemma \ref{int_prods},  $\varepsilon_{ijk}\bm{\mathrm{e}}^j\wedge{\bm{\xi}}^k=0$ implies ${\tilde{u}_0}\iprod{\bm{\xi}}^k=0$. As a consequence of this, on solutions to the field equations \eqref{field_equations_HK}, we have
    \begin{equation}\label{conditions_u}
    \tilde{u}_0\iprod {\bm{\mathrm{D}}} \bm{\mathrm{e}}^i=0\,,\quad \tilde{u}_0\iprod\bm{\mathrm{F}}^i=0\,.
    \end{equation}

\medskip

Let us now take a solution $(\bm{\mathrm{e}}^i, \bm{\mathrm{A}}^i)$ to the field equations \eqref{field_equations_HK}, we then have
\begin{align*}
  &\pounds_{\tilde{u}_0}\bm{\mathrm{e}}^i=\tilde{u}_0\iprod{\bm{\mathrm{d}}}\bm{\mathrm{e}}^i+{\bm{\mathrm{d}}}(\tilde{u}_0\iprod\bm{\mathrm{e}}^i)
  =-\tilde{u}_0\iprod(\varepsilon^{ijk}{\bm{\mathrm{A}}}_j{\bm{\mathrm{e}}}_k)=\varepsilon^{ijk}{\bm{\mathrm{e}}}_j(\tilde{u}_0\iprod{\bm{\mathrm{A}}}_k)\\
  &\pounds_{\tilde{u}_0}\bm{\mathrm{A}}^{\!\!i}\!= \tilde{u}_0\iprod{\bm{\mathrm{d}}}\bm{\mathrm{A}}^i\!+\!{\bm{\mathrm{d}}}(\tilde{u}_0\iprod\bm{\mathrm{A}}^i)
  =-\frac{1}{2}\tilde{u}_0\iprod(\varepsilon^{ijk} {\bm{\mathrm{A}}}_j\wedge {\bm{\mathrm{A}}}_k)\!+\!{\bm{\mathrm{d}}}(\tilde{u}_0\iprod{\bm{\mathrm{A}}}^i)\\
  &\qquad\,\,={\bm{\mathrm{d}}}(\tilde{u}_0\iprod{\bm{\mathrm{A}}}^i)+\varepsilon^{ijk}{\bm{\mathrm{A}}}_j(\tilde{u}_0\iprod{\bm{\mathrm{A}}}_k)={\bm{\mathrm{D}}}(\tilde{u}_0\iprod{\bm{\mathrm{A}}}^i)
\end{align*}
where we have made use of Cartan's magic formula for the Lie derivative acting on differential forms $\beta$ ($\pounds_{\tilde{u}_0}\beta=\tilde{u}_0\iprod{\bm{\mathrm{d}}}\beta+{\bm{\mathrm{d}}}\tilde{u}_0\iprod\beta$), the conditions \eqref{conditions_u}, and $\tilde{u}_0\iprod{\bm{\mathrm{e}}}^k=0$. As we can see, Lie-dragging solutions to the field equations along the vector field $\tilde{u}_0$ is equivalent to performing the infinitesimal transformations \eqref{SO3transformations} with local parameter ${\bm{\Lambda}}^i:=\tilde{u}_0\iprod{\bm{\mathrm{A}}}^i$. These transformations leave the degenerate 4-metric ${\bm{\mathrm{e}}}^i\otimes {\bm{\mathrm{e}}}_i$ invariant.

\medskip

As a final comment, it is important to point out that, as proved in Lemma \ref{useful_result}, the solutions to the field equations \eqref{field_equations_HK} also satisfy
\[
\DD\ee_i\!\wedge\FF^i=0\,.
\]

\subsection{The Extended Husain-Kucha\v{r} model}\label{subsec_EH}

Now we will extend the HK model along the lines of \cite{BarberoG:1997nrd} by introducing an action that somehow ``interpolates'' between the usual HK action and the one for self-dual gravity \cite{BarberoG:1994fcp,BarberoG:2023qih}
\begin{equation}\label{EHK_action}
  S({\bm{\mathrm{e}}},{\bm{\mathrm{A}}}, {\bm{\phi}}):=\int_{\mathcal{M}} \left(\frac{1}{2}\varepsilon_{ijk}{\bm{\mathrm{e}}}^i\wedge {\bm{\mathrm{e}}}^j\wedge {\bm{\mathrm{F}}}^k-{\bm{\mathrm{d}}}{\bm{\phi}}\wedge {\bm{\mathrm{e}}}_i\wedge{\bm{\mathrm{F}}}^i\right)\,.
\end{equation}
Here, the manifold $\mathcal{M}$ and the fields used to write \eqref{EHK_action} are the same as those introduced in the preceding subsection (and satisfy the same conditions), but we have introduced an additional scalar field ${\bm{\phi}}\in C^\infty(\mathcal{M})$.

The action \eqref{EHK_action} is a particular case of the action for Euclidean self-dual gravity discussed in \cite{BarberoG:2023qih}
\begin{equation}\label{Self_dual_action}
  S({\bm{\mathrm{e}}},{\bm{\mathrm{A}}}, {\bm{\alpha}}):=\int_{\mathcal{M}} \left(\frac{1}{2}\varepsilon_{ijk}{\bm{\mathrm{e}}}^i\wedge {\bm{\mathrm{e}}}^j\wedge {\bm{\mathrm{F}}}^k-{\bm{\alpha}}\wedge {\bm{\mathrm{e}}}_i\wedge{\bm{\mathrm{F}}}^i\right)\,,
\end{equation}
with ${\bm{\alpha}}$ exact (i.e. ${\bm{\alpha}}={\bm{\mathrm{d}}}{\bm{\phi}}$). As in the previous subsection, \eqref{EHK_action} is diff-invariant and invariant under the infinitesimal $SO(3)$ transformations of the HK model \eqref{SO3transformations}.

 The field equations given by \eqref{EHK_action} are
\begin{align}\label{field_equations_EHK}
\begin{split}
   & \varepsilon_{ijk}\bm{\mathrm{e}}^j\wedge {\bm{\mathrm{D}}} \bm{\mathrm{e}}^k-{\bm{\mathrm{d}}}{\bm{\phi}}\wedge{\bm{\mathrm{D}}}\bm{\mathrm{e}}_i=0\,, \\
   & \varepsilon_{ijk}\bm{\mathrm{e}}^j\wedge \bm{\mathrm{F}}^k+{\bm{\mathrm{d}}}{\bm{\phi}}\wedge \bm{\mathrm{F}}_i=0\,,\\
   & {\bm{\mathrm{D}}} \bm{\mathrm{e}}^i\!\wedge \bm{\mathrm{F}}_i=0\,.
\end{split}
\end{align}
It is interesting to notice now that, as a consequence of Lemma \ref{useful_result}, the last equation in \eqref{field_equations_EHK} can be ignored since it is satisfied if the first two ones hold.

%
%

\section{GNH formulation for the HK model}\label{sec_Hamiltonian_HK}

To get the Lagrangian and Hamiltonian formulations of action \eqref{HK_action}, we take advantage of the foliation naturally associated with $\mathcal{M}=\mathbb{R}\times\Sigma$ discussed above (note that $\jmath_t(\Sigma)=\Sigma_t$). By making use of
\[
\int_{\mathbb{R}\times\Sigma}\mathcal{L}=\int_{\mathbb{R}\times\Sigma}\mathrm{d}t\wedge\partial_t\iprod\mathcal{L}=\int_{\mathbb{R}}\mathrm{d}t\int_{\Sigma_t}\partial_t\iprod\mathcal{L}
=\int_{\mathbb{R}}\mathrm{d}t\int_{\Sigma}\jmath_t^*\partial_t\iprod\mathcal{L}\,.
\]
we get the Lagrangian $L:=\int_\Sigma\jmath_t^*\partial_t\iprod\mathcal{L}$ defined on $\Sigma$ from the 4-form $\mathcal{L}$ appearing in the action. This Lagrangian is defined on the tangent bundle of the configuration space
\[
Q=C^\infty(\Sigma)^3 \times \Omega^1(\Sigma)^3 \times C^\infty(\Sigma)^3 \times \Omega^1(\Sigma)^3\,,
\]
subject to the condition that the $(e^i)$ define a coframe on $\Sigma$. The points of $Q$ have the form $(e_{\mathrm{t}}^i,e^i,A_{\mathrm{t}}^i,A^i)$. To arrive at this result, we interpret the objects
\begin{align*}
&  e_{\mathrm{t}}^i(t)\,:=\jmath_t^\ast\partial_t\iprod\bm{\mathrm{e}}^i\,, && e^i(t)\,:=\jmath_t^\ast \bm{\mathrm{e}}^i\,,\\
&  A_{\mathrm{t}}^i(t):=\jmath_t^\ast\partial_t\iprod\Aa^i\,, && A^i(t):=\jmath_t^\ast \Aa^i\,,
\end{align*}
as defining curves in the configuration space $Q$ and consider also the velocities
\begin{align*}
  & v_{e_{\mathrm{t}}}^i(t):=\jmath_t^\ast \pounds_{\partial_t} ({\partial_t}\iprod{\bm{\mathrm{e}}}^i)=\frac{\mathrm{d}e_{\mathrm{t}}^i}{\mathrm{d}\tau}(t)\,, && v_{e}^i(t):=\jmath_t^\ast \pounds_{\partial_t} \bm{\mathrm{e}}^i=\frac{\mathrm{d}e^i}{\mathrm{d}\tau}(t)\,,\\
  & v_{A_{\mathrm{t}}}^i(t):=\jmath_t^\ast \pounds_{\partial_t} ({\partial_t}\iprod{\Aa}^i)=\frac{\mathrm{d}A_{\mathrm{t}}^i}{\mathrm{d}\tau}(t)\,, && v_A^i(t):=\jmath_t^\ast \pounds_{\partial_t} \Aa^i=\frac{\mathrm{d}A^i}{\mathrm{d}\tau}(t)\,,
\end{align*}
defined in terms of $\pounds_{\partial_t}$,  the Lie derivative along $\partial_t$.

The Lagrangian is
\begin{equation*}
    L(\mathrm{v}) = \int_\Sigma \left(\frac{1}{2} \varepsilon_{ijk} e^i \wedge e^j \wedge v_A^k -  A_{\mathrm{t}}^i \varepsilon_{ijk} e^j \wedge D e^k + e_{\mathrm{t}}^i \varepsilon_{ijk} e^j \wedge F^k \right)\,,
\end{equation*}
where the velocity corresponding to $(e_{\mathrm{t}}^i,e^i,A_{\mathrm{t}}^i,A^i)$ is $\mathrm{v}:=(v_{e_\mathrm{t}}^i, v_e^i,v_{A_\mathrm{t}}^i, v_A^i)$. We have defined
\[
F^i:=\mathrm{d}A^i+\frac{1}{2}\varepsilon^{ijk}A_j\wedge A_k,\qquad\qquad
De^i:=\mathrm{d}e^i+\varepsilon^{ijk}A_j\wedge e_k\,.
\]

The fiber derivative is
\begin{equation}\label{FL_st}
    \langle F\!L(\mathrm{v}),\mathrm{w}\rangle=\frac{1}{2}\int_\Sigma \varepsilon_{ijk} e^i \wedge e^j \wedge w_A^k \,,
\end{equation}
and the primary constraint submanifold in phase space is defined by

\vspace*{-3mm}

\begin{alignat*}{3}
    &\mathbf{p}_{e_{\mathrm{t}}} &&= 0 \,, \\
    &\mathbf{p}_e &&= 0 \,, \\
    &\mathbf{p}_{A_{\mathrm{t}}} &&= 0 \,, \\
    &\mathbf{p}_A (\mathrm{w}) &&= \frac{1}{2}\int_\Sigma \varepsilon_{ijk} e^j \wedge e^k \wedge w_A^i \,,
\end{alignat*}
where $\mathrm{w}:=(w_{e_\mathrm{t}}^i, w_e^i,w_{A_\mathrm{t}}^i, w_A^i)$ and we denote the momenta at the fiber of $T^*Q$ over the point $(e_{\mathrm{t}}^i,e^i,A_{\mathrm{t}}^i,A^i)\in Q$ as $\mathbf{p}:=(\mathbf{p}_{e_{\mathrm{t}}},\mathbf{p}_e,\mathbf{p}_{A_{\mathrm{t}}},\mathbf{p}_A)$.

On the primary constraint submanifold, the Hamiltonian is given by $H=E\circ F\!L^{-1}$ where $E:=\langle F\!L(v),v\rangle-L$ is the energy (notice that, as $F\!L$ is injective, it is a bijection onto its image). $H$ is given by
\begin{equation}
\label{asd_hamiltonian_st}
     H(\mathbf{p}) = \int_\Sigma \left( A_{\mathrm{t}}^i \varepsilon_{ijk} e^j \wedge D e^k - e_{\mathrm{t}}^i \varepsilon_{ijk} e^j \wedge F^k \right)\,.
\end{equation}
As we can see, the Hamiltonian depends on $\mathbf{p}$ only through its base point.

\medskip

A vector field in phase space $\mathbb{Y}\in\mathfrak{X}(T^*Q)$ has the form
\[
\mathbb{Y}=(Y_{e_{\mathrm{t}}}^i, Y_e^i, Y_{A_{\mathrm{t}}}^i, Y_A^i, {\bm{\mathrm{Y}}}_{\!\!\bm{\mathrm{p}}_{e_{\mathrm{t}}}}, {\bm{\mathrm{Y}}}_{\!\!\bm{\mathrm{p}}_{e}}, {\bm{\mathrm{Y}}}_{\!\!\bm{\mathrm{p}}_{A_{\mathrm{t}}}}, {\bm{\mathrm{Y}}}_{\!\!\bm{\mathrm{p}}_A})\,,
\]
where we use boldface characters for the components of $\mathbb{Y}$ that are dual objects. Notice that $(Y_e^i)\,,(Y_A^i)\in \Omega^1(\Sigma)^3$ and $(Y_{e_{\mathrm{t}}}^i)\,,(Y_{A_{\mathrm{t}}}^i)\in C^\infty(\Sigma)^3$. The vector fields $\mathbb{Y}\in \mathfrak{X}(T^*Q)$ tangent to the primary constraint submanifold $\mathsf{M}_0$ have the following form
\[
\mathbb{Y}=(Y_{e_{\mathrm{t}}}, Y_e, Y_{A_{\mathrm{t}}}, Y_A,{\bm{0}},{\bm{0}},{\bm{0}},{\bm{\mathrm{Y}}}_{\!\!\bm{\mathrm{p}}_{A}}),
\]
where
\[
{\bm{\mathrm{Y}}}_{\!\!\bm{\mathrm{p}}_{A}}(\cdot)=\int_\Sigma \varepsilon_{ijk}Y_e^j\wedge e^k\wedge \cdot\,.
\]

We can write the action of the exterior differential in phase space $\dd$ of the Hamiltonian $H$ acting on a vector field $\mathbb{Y}$ as
\begin{align}\label{dH_st}
\begin{split}
  \dd H(\mathbb{Y})=\int_\Sigma\Big(&-Y_{e_{\mathrm{t}}}^i\varepsilon_{ijk}e^j\wedge F^k\\
   & +Y_e^i\wedge \big(\varepsilon_{ijk}D\!A_{\mathrm{t}}^j\wedge e^k+\varepsilon_{ijk}e_{\mathrm{t}}^jF^k\big)\\
   & +Y_{A_{\mathrm{t}}}^i \varepsilon_{ijk}e^j\wedge De^k \\
   & +Y_A^i\wedge\big(D(\varepsilon_{ijk}e^je_{\mathrm{t}}^k)-A_{\mathrm{t}}^je_j\wedge e_ i\big) \Big)\,.
\end{split}
\end{align}

\medskip

The action of the canonical symplectic form $\Omega$ on vector fields in phase space $\mathbb{Y}, \mathbb{Z}\in \mathfrak{X}(T^*Q)$
\begin{align*}
  \Omega(\mathbb{Z},\mathbb{Y}) = & +{\bm{\mathrm{Y}}}_{\!\!\bm{\mathrm{p}}_{e_{\mathrm{t}}}}(Z_{e_{\mathrm{t}}})  -  {\bm{\mathrm{Z}}}_{\bm{\mathrm{p}}_{e_{\mathrm{t}}}}(Y_{e_{\mathrm{t}}})
                                  +{\bm{\mathrm{Y}}}_{\!\!\bm{\mathrm{p}}_{e}}(Z_{e})  -  {\bm{\mathrm{Z}}}_{\bm{\mathrm{p}}_{e}}(Y_{e})\\
                                & +{\bm{\mathrm{Y}}}_{\!\!\bm{\mathrm{p}}_{A_{\mathrm{t}}}}\!(Z_{A_{\mathrm{t}}})  -  \!{\bm{\mathrm{Z}}}_{\bm{\mathrm{p}}_{A_{\mathrm{t}}}}\!(Y_{A_{\mathrm{t}}}\!)+\!{\bm{\mathrm{Y}}}_{\!\!\bm{\mathrm{p}}_{A}}\!(Z_{A})\!  -  {\bm{\mathrm{Z}}}_{\bm{\mathrm{p}}_{A}}\!(Y_{A})\,.
\end{align*}

Instead of using the full phase space, in the GNH approach one works directly on the primary constraint submanifold $\mathsf{M}_0$. In the models we are considering here, this is straightforward because $\mathsf{M}_0$ can be easily identified with $Q$ (the momenta are either zero or can be written in terms of objects of $Q$). Thus, the vector fields on $\mathsf{M}_0$ have the form
\[
\mathbb{Y}_0=(Y_{e_{\mathrm{t}}}^i, Y_e^i, Y_{A_{\mathrm{t}}}^i, Y_A^i)\,,
\]
and the pullback of $\Omega$ to $\mathsf{M}_0$ acting on these vector fields is
\begin{equation}\label{pullback_omega_st}
\omega(\mathbb{Z}_0,\mathbb{Y}_0)=\int_\Sigma\,\left(Y_e^i\wedge\varepsilon_{ijk}e^j\wedge Z_A^k-Y_A^i\wedge\varepsilon_{ijk}e^j\wedge Z_e^k\right)\,.
\end{equation}
We can now solve the fundamental equation of the GNH approach $\imath_{\mathbb{Z}_0}\omega=\dd H$ by equating the terms proportional to the different components of the vector field $\mathbb{Y}$ in \eqref{dH_st} and \eqref{pullback_omega_st}. By doing this, we get the secondary constraints
\begin{align}
&\varepsilon_{ijk}e^j\wedge F^k=0\,,\label{L1_st}\\
&\varepsilon_{ijk}e^j\!\wedge\! De^k=0\,,\label{L2_st}
\end{align}
and the following equations for the components of the Hamiltonian vector field $\mathbb{Z}_0$
\begin{align}
  & \varepsilon_{ijk}e^j\wedge(Z_A^k-D\!A_{\mathrm{t}}^k)=\varepsilon_{ijk}e_{\mathrm{t}}^jF^k\,,\label{E1_st}\\
  & \varepsilon_{ijk}e^j\wedge(Z_e^k-De_{\mathrm{t}}^k-\varepsilon^k_{\phantom{k}\ell m}e^\ell A_{\mathrm{t}}^m)=\varepsilon_{ijk}e_{\mathrm{t}}^jDe^k.\label{E2_st}
\end{align}
In terms of the objects
\begin{align}
X_A^i&:=Z_A^i-DA_{\mathrm{t}}^i\,,\\
X_e^i&:=Z_e^i-De_{\mathrm{t}}^i-\varepsilon^i_{\phantom{i}jk}e^jA_{\mathrm{t}}^k\,,
\end{align}
the preceding equations become
\begin{align}
  & \varepsilon_{ijk}e^j\wedge X_A^k=\varepsilon_{ijk}e_{\mathrm{t}}^jF^k\,,\label{EE1_st}\\
  & \varepsilon_{ijk}e^j\wedge X_e^k=\varepsilon_{ijk}e_{\mathrm{t}}^jDe^k.\label{EE2_st}
\end{align}
There are no conditions on $Z^i_{e_{\mathrm{t}}}$ and $Z^i_{A_{\mathrm{t}}}$ so, at this point, they are arbitrary. This implies that $e_{\mathrm{t}}^i$ and $A_{\mathrm{t}}^i$ are also arbitrary.

We have to check now if the vector fields satisfying \eqref{E1_st} and \eqref{E2_st} are tangent to the submanifold $\mathsf{M}_1\subset\mathsf{M}_0$ defined by the secondary constraints \eqref{L1_st} and \eqref{L2_st}. To do this, we solve \eqref{E1_st} and \eqref{E2_st} for $Z_e^i$ and $Z_A^i$.

To understand the content of the secondary constraints \eqref{L1_st} and \eqref{L2_st} (and also, for notational convenience) we introduce
\[
\mathbb{F}_{ij}:=\left(\frac{F_i\wedge e_j}{\mathsf{vol}_e}\right)\,,\quad \mathbb{B}_{ij}:=\left(\frac{De_i\wedge e_j}{\mathsf{vol}_e}\right)\,,
\]
where we have made use of the fact that $\mathsf{vol}_e:=\frac{1}{3!}\varepsilon_{ijk}e^i\wedge e^j\wedge e^k$ is a volume form on $\Sigma$. In terms of $\mathbb{F}_{ij}$ and $\mathbb{B}_{ij}$ we have
\[
F_i=\frac{1}{2}\mathbb{F}_{ij}\varepsilon^{jk\ell}e_k\wedge e_\ell\,,\quad De_i=\frac{1}{2}\mathbb{B}_{ij}\varepsilon^{jk\ell}e_k\wedge e_\ell\,,
\]
and we can write the constraints \eqref{L1_st} and \eqref{L2_st} in the following simple form
\begin{align}
  & \varepsilon_{ijk}\mathbb{F}^{jk}=0\,, \label{LL1_st}\\
  & \varepsilon_{ijk}\mathbb{B}^{jk}=0\,. \label{LL2_st}
\end{align}

Equations \eqref{EE1_st}-\eqref{EE2_st} can be solved for $X_{A}^i$ and $X_{e}^i$ by following the procedure explained in Appendix C of \cite{BarberoG:2021ekv}, obtaining
\begin{align}\label{X_st}
\begin{split}
  X_A^i & =\mathbb{F}^{ij}\varepsilon_{jk\ell}e_{\mathrm{t}}^ke^\ell\,, \\
  X_e^i & =\mathbb{B}^{ij}\varepsilon_{jk\ell}e_{\mathrm{t}}^ke^\ell\,,
\end{split}
\end{align}
or, equivalently,
\begin{align}\label{Z_st}
\begin{split}
  Z_A^i & =DA_{\mathrm{t}}^i+\mathbb{F}^{ij}\varepsilon_{jk\ell}e_{\mathrm{t}}^ke^\ell\,, \\
  Z_e^i & =De_{\mathrm{t}}^i+\varepsilon^{ijk}e_jA_{\mathrm{t}k}+\mathbb{B}^{ij}\varepsilon_{jk\ell}e_{\mathrm{t}}^ke^\ell\,.
\end{split}
\end{align}
From these expressions, it is possible to get a very neat understanding of the dynamics of the model. To this end, we point out that the non-degeneracy of the triads $e^i$ allows us to find the unique vector field $\xi\in\mathfrak{X}(\Sigma)$ satisfying the condition $\xi\iprod e^i=e_{\mathrm{t}}^i$. In terms of this geometric object and using Cartan's magic formula, we have
\begin{align}\label{ZZ_st}
\begin{split}
  Z_A^i & =\pounds_\xi A^i+D(A_{\mathrm{t}}^i-\xi\iprod A^i)\,, \\
  Z_e^i & =\pounds_\xi e^i-\varepsilon^i_{\phantom{i}jk}(A_{\mathrm{t}}^j-\xi\iprod A^j)e^k\,.
\end{split}
\end{align}
As we can see, we recover a well-known result: the dynamics is simply given by diffeomorphisms generated by the vector field $\xi$ and field-dependent internal $SO(3)$ transformations.

The consistency of the dynamics in the GNH approach is proven by showing that the vector fields just found are tangent to $\mathsf{M}_1$, which we recall is defined by the secondary constraints \eqref{L1_st} and \eqref{L2_st}. The relevant tangency conditions are
\begin{align}
  & \varepsilon_{ijk}Z_e^j\wedge F^k+\varepsilon_{ijk}e^j\wedge DZ_A^k=0\,,\label{tan1_st} \\
  & D(\varepsilon_{ijk}e^j\wedge Z_e^k)+Z_A^j\wedge e_i\wedge e_j=0\,.\label{tan2_st}
\end{align}
By following a procedure similar to the one used in the tangency analysis presented in \cite{BarberoG:2023qih}, it is possible to remove the derivatives of $Z_A^i$ and $Z_e^i$ from \eqref{tan1_st} and \eqref{tan2_st} and rewrite these tangency conditions in the form
\begin{align}
  & \varepsilon_{ijk}X_e^j\wedge F^k+\varepsilon_{ijk}De^j\wedge X_A^k=0\,,\label{tan01_st}\\
  & e_i\wedge e_j\wedge X_A^j-e_{\mathrm{t}j}e_i\wedge F^j+e_{\mathrm{t}i} F_j\wedge e^j=0\,.\label{tan02_st}
\end{align}
By plugging \eqref{X_st} into these expressions, it is straightforward to check that they hold modulo the secondary constraints.

Although, at this point, we have arrived at a satisfactory understanding of the dynamics of the model within the GNH framework, it is desirable to connect the previous results with the literature on the subject and, in particular, to put them in the language of the Ashtekar formulation for GR (of which the HK model can be considered as a precursor). To this end, we point out that the pullback $\omega$ of the canonical symplectic form onto the primary constraint submanifold $\mathsf{M}_0$ can be formally written as
\begin{equation}\label{omega_st}
\omega=\int_\Sigma \dd A^i\ww\dd\left(\frac{1}{2}\varepsilon_{ijk}e^j\wedge e^k\right)\,,
\end{equation}
which leads to \eqref{pullback_omega_st}. This suggests the introduction of new variables that will allow us to rewrite \eqref{omega_st} in terms of canonically conjugate objects in the usual sense. Namely, we define the 2-forms $(E_i)\in\Omega^2(\Sigma)^3$
\[
E_i:=\frac{1}{2}\varepsilon_{ijk}e^j\wedge e^k\,.
\]
We introduce now a fiducial volume form  $\mathsf{vol}_0$ in $\Sigma$ and, by duality, we define the following vector fields
\[
\widetilde{E}_i:=\left(\frac{\cdot\wedge E_i}{\mathsf{vol}_0}\right)\in \mathfrak{X}(\Sigma)\,.
\]

In terms of the basic geometric objects $(e_{\mathrm{t}}^i,\widetilde{E}^i,A_{\mathrm{t}}^i,A^i)$ and the associated vector fields $(Y_{e_{\mathrm{t}}}^i, Y_{\widetilde{E}}^i, Y_{A_{\mathrm{t}}}^i, Y_A^i)$, we can write the pre-symplectic form as

\[
\omega=\int_\Sigma \left(\dd A^i\ww\dd \widetilde{E}_i\right) \mathsf{vol}_0\,.
\]
To understand this expression, the following computation is helpful
\begin{align*}
\imath_\mathbb{Y}\imath_{\mathbb{X}}\omega&=\imath_{\mathbb{Y}}\int_\Sigma\left(X_A^i\wedge \dd E_i-\dd A^i\wedge X_{Ei}\right)=\int_\Sigma \left(X_{A}^i\wedge Y_{Ei}-Y_{A}^i\wedge X_{Ei}\right)\\
&= \hphantom{\imath_{\mathbb{Y}}} \int_\Sigma \left(X_{A}^i\wedge (Y_{\widetilde{E}i}\iprod\mathsf{vol}_0)-Y_{Ai}\wedge (X_{\widetilde{E}}^i\iprod\mathsf{vol}_0)\right)=\int_\Sigma \left(Y_{\widetilde{E}i}\iprod X_A^i-X_{\widetilde{E}i}\iprod Y_A^i\right)\mathsf{vol}_0\,,
\end{align*}
where we have used the fact that, for 1-forms $Z_A^i$, one has
\[
0=Z_A^i\wedge \mathsf{vol}_0\Rightarrow 0=Y_{\widetilde{E}i}\iprod(Z_A^i\wedge \mathsf{vol}_0)=(Y_{\widetilde{E}i}\iprod Z_A^i)\mathsf{vol}_0-Z_A^i\wedge (Y_{\widetilde{E}i}\iprod\mathsf{vol}_0)\,.
\]
It is important to note at this point that $\omega$ is just pre-symplectic because it involves neither $e_{\mathrm{t}}^i$ nor $A_{\mathrm{t}}^i$ (which can then be taken as ``external'' inputs).

In terms of the variables defined above, the constraints become (see \cite{BarberoG:2023qih})
\begin{align}
  & \mathrm{div}_0\widetilde{E}_i+\varepsilon_{ijk}{\widetilde{E}^k}\iprod A^j=0\,,\label{const_st_Ei1} \\
  & {\widetilde{E}_i}\iprod F^i=0\,.\label{const_st_Ei2}
\end{align}

Finally, the Hamiltonian vector fields are
\begin{align}\label{vec_st_Ei}
  & Z_A^i =\pounds_\xi A^i+D(A_{\mathrm{t}}^i-\xi\iprod A^i)\,, \\
  & Z_{\widetilde{E}}^i =\pounds_\xi \widetilde{E}^i+(\mathrm{div}_0\xi) \widetilde{E}^i-\varepsilon^{ijk}(A_{\mathrm{t}j}-\xi\iprod A_j)\widetilde{E}_k\,,
\end{align}
where, as above, $\xi\in\mathfrak{X}(\Sigma)$ satisfies $\xi\iprod e^i=e_{\mathrm{t}}^i$ and $\pounds_\xi \widetilde{E}^i$ is the standard expression for the Lie derivative of a vector field.

To wrap up this section we want to emphasize that we arrive in a very natural way at the Ashtekar phase space: it suffices to look at \eqref{omega_st} to identify the variable $E_i$ which gives rise to $\widetilde{E}_i$ after realizing that the introduction of a fiducial volume form $\mathsf{vol}_0$ leads to the simple form of the constraints \eqref{const_st_Ei1} and \eqref{const_st_Ei2}. Notice, by the way, that this is usually done in a local coordinate patch and relying on the coordinate volume form $\mathrm{d}x_1\wedge\mathrm{d}x_2\wedge\mathrm{d}x_3$.

%
%

\section{GNH formulation for the Extended HK model}\label{sec_Hamiltonian_extended}

To get the Lagrangian and Hamiltonian formulations given by the action \eqref{EHK_action} we proceed as in the HK model studied above. The configuration space is now
\[
Q=C^\infty(\Sigma)^3 \times \Omega^1(\Sigma)^3 \times C^\infty(\Sigma)^3 \times \Omega^1(\Sigma)^3\times C^\infty(\Sigma)\,,
\]
with the $(e^i)$ defining a coframe on $\Sigma$. The points of $Q$ have the form $(e_{\mathrm{t}}^i,e^i,A_{\mathrm{t}}^i,A^i,\phi)$. The Lagrangian is
\begin{align*}
    L(\mathrm{v}) = &\int_\Sigma \left( \Big(\frac{1}{2} \varepsilon_{ijk} e^i \wedge e^j +e_k\wedge\mathrm{d}\phi \Big) \wedge v_A^k +  A_{\mathrm{t}}^i D \Big( \frac{1}{2}\varepsilon_{ijk} e^j \wedge e^k +  e_i\wedge \mathrm{d}\phi \Big)\right.\\
    & \hspace{5.5cm}\left.{\phantom{\Big|}}-(e_i \wedge F^i)v_\phi + e_{\mathrm{t}}^i \left( \varepsilon_{ijk} e^j \wedge F^k + F^i\wedge \mathrm{d}\phi \right) \right)\,,
\end{align*}
where the velocity corresponding to $(e_{\mathrm{t}}^i,e^i,A_{\mathrm{t}}^i,A^i,\phi)$ is $\mathrm{v}:=(v_{e_\mathrm{t}}^i, v_e^i,v_{A_\mathrm{t}}^i, v_A^i,v_\phi)$. The fiber derivative is
\begin{equation}\label{FL}
    \langle F\!L(\mathrm{v}),\mathrm{w}\rangle=\int_\Sigma \left(\Big(\frac{1}{2}\varepsilon_{ijk} e^i \wedge e^j + e_k \wedge \mathrm{d}\phi\Big) \wedge w_A^k- (e_i\wedge F^i)w_\phi \right) \,,
\end{equation}
and the primary constraint submanifold in phase space is defined by

\vspace*{-3mm}

\begin{alignat*}{3}
    &\mathbf{p}_{e_{\mathrm{t}}} &&= 0 \,, \\
    &\mathbf{p}_e &&= 0 \,, \\
    &\mathbf{p}_{A_{\mathrm{t}}} &&= 0 \,, \\
    &\mathbf{p}_A (\mathrm{w}) &&= \int_\Sigma \left( \frac{1}{2}\varepsilon_{ijk} e^j \wedge e^k  +  e_i \wedge \mathrm{d}\phi \right) \wedge w_A^i \,, \\
    &\mathbf{p}_\phi (\mathrm{w}) &&= -\int_\Sigma (e_i\wedge F^i)w_\phi \,,
\end{alignat*}
where $\mathrm{w}:=(w_{e_\mathrm{t}}^i, w_e^i,w_{A_\mathrm{t}}^i, w_A^i,w_\phi)$.

The Hamiltonian is now
\begin{equation}
\label{asd_hamiltonian}
     H(\mathbf{p}) = -\int_\Sigma \left( A_{\mathrm{t}}^i D \Big( \frac{1}{2}\varepsilon_{ijk} e^j \wedge e^k +e_i\wedge \mathrm{d}\phi\Big) + e_{\mathrm{t}}^i \Big( \varepsilon_{ijk} e^j \wedge F^k + F_i\wedge \mathrm{d}\phi \Big) \right)\,.
\end{equation}
In the previous expression, we denote the momenta at the fiber of $T^*Q$ over the point $(e_{\mathrm{t}}^i,e^i,A_{\mathrm{t}}^i,A^i,\phi)\in Q$ as $\mathbf{p}:=(\mathbf{p}_{e_{\mathrm{t}}},\mathbf{p}_e,\mathbf{p}_{A_{\mathrm{t}}},\mathbf{p}_A,\mathbf{p}_\phi)$. The Hamiltonian depends on $\mathbf{p}$ only through its base point.

\medskip

A vector field in phase space $\mathbb{Y}\in\mathfrak{X}(T^*Q)$ has the structure
\[
\mathbb{Y}=(Y_{e_{\mathrm{t}}}^i, Y_e^i, Y_{A_{\mathrm{t}}}^i, Y_A^i, Y_\phi, {\bm{\mathrm{Y}}}_{\!\!\bm{\mathrm{p}}_{e_{\mathrm{t}}}}, {\bm{\mathrm{Y}}}_{\!\!\bm{\mathrm{p}}_{e}}, {\bm{\mathrm{Y}}}_{\!\!\bm{\mathrm{p}}_{A_{\mathrm{t}}}}, {\bm{\mathrm{Y}}}_{\!\!\bm{\mathrm{p}}_A}, {\bm{\mathrm{Y}}}_{\!\!\bm{\mathrm{p}}_{\phi}})\,,
\]
where we use boldface characters for the components of $\mathbb{Y}$ that are dual objects (covectors in phase space). Notice that $Y_e^i\,,Y_A^i\in \Omega^1(\Sigma)$ and $Y_{e_{\mathrm{t}}}^i\,,Y_{A_{\mathrm{t}}}^i\,,Y_\phi\in C^\infty(\Sigma)$.

We can now write the action of the exterior differential in phase space $\dd$ of the Hamiltonian $H$ acting a vector field $\mathbb{Y}$ as
\begin{align}
  \dd H(\mathbb{Y})=\int_\Sigma\Big(&-Y_{e_{\mathrm{t}}}^i(\varepsilon_{ijk}e^j\wedge F^k+\mathrm{d}\phi\wedge F_i)\nonumber  \\
   & +Y_e^i\wedge \big(\varepsilon_{ijk}D\!A_{\mathrm{t}}^j\wedge e^k-D\!A_{\mathrm{t}i}\wedge \mathrm{d}\phi+\varepsilon_{ijk}e_{\mathrm{t}}^jF^k\big)\nonumber \\
   & +Y_{A_{\mathrm{t}}}^i D(\mathrm{d}\phi\wedge e_ i-\frac{1}{2}\varepsilon_{ijk}e^j\wedge e^k) \label{dH}\\
   & +Y_A^i\wedge\big(D(\varepsilon_{ijk}e_je_{\mathrm{t}}^k- e_{\mathrm{t}}^i\mathrm{d}\phi)-A_{\mathrm{t}}^je_j\wedge e_ i-\varepsilon_{ijk}A_{\mathrm{t}}^j\mathrm{d}\phi\wedge e^k\big) \nonumber\\
   & +Y_\phi \,\mathrm{d}\big(e_{\mathrm{t}}^iF_i+A_{\mathrm{t}}^iDe_i\big)\Big)\,.\nonumber
\end{align}

\medskip

The action of the canonical symplectic form $\Omega$ on $\mathbb{Y}, \mathbb{Z}\in \mathfrak{X}(T^*Q)$ on vector fields in phase space is
\begin{align*}
  \Omega(\mathbb{Z},\mathbb{Y}) = & +{\bm{\mathrm{Y}}}_{\!\!\bm{\mathrm{p}}_{e_{\mathrm{t}}}}(Z_{e_{\mathrm{t}}})  -  {\bm{\mathrm{Z}}}_{\bm{\mathrm{p}}_{e_{\mathrm{t}}}}(Y_{e_{\mathrm{t}}})
                                  +{\bm{\mathrm{Y}}}_{\!\!\bm{\mathrm{p}}_{e}}(Z_{e})  -  {\bm{\mathrm{Z}}}_{\bm{\mathrm{p}}_{e}}(Y_{e})
                                  +{\bm{\mathrm{Y}}}_{\!\!\bm{\mathrm{p}}_{A_{\mathrm{t}}}}(Z_{A_{\mathrm{t}}})  -  {\bm{\mathrm{Z}}}_{\bm{\mathrm{p}}_{A_{\mathrm{t}}}}(Y_{A_{\mathrm{t}}})\\
                                & +{\bm{\mathrm{Y}}}_{\!\!\bm{\mathrm{p}}_{A}}(Z_{A})  -  {\bm{\mathrm{Z}}}_{\bm{\mathrm{p}}_{A}}(Y_{A})
                                  +{\bm{\mathrm{Y}}}_{\!\!\bm{\mathrm{p}}_{\phi}}(Z_{\phi})  -  {\bm{\mathrm{Z}}}_{\bm{\mathrm{p}}_{\phi}}(Y_{\phi})\,.
\end{align*}
The vector fields $\mathbb{Y}\in \mathfrak{X}(T^*Q)$ tangent to the primary constraint submanifold $\mathsf{M}_0$ have the following form
\[
\mathbb{Y}=(Y_{e_{\mathrm{t}}}, Y_e, Y_{A_{\mathrm{t}}}, Y_A, Y_{\phi},{\bm{0}},{\bm{0}},{\bm{0}},{\bm{\mathrm{Y}}}_{\!\!\bm{\mathrm{p}}_{A}},{\bm{\mathrm{Y}}}_{\!\!\bm{\mathrm{p}}_{\phi}}),
\]
where
\[
{\bm{\mathrm{Y}}}_{\!\!\bm{\mathrm{p}}_{A}}(\cdot)=\int_\Sigma (\varepsilon_{ijk}Y_e^j\wedge e^k+Y_{ei}\wedge \mathrm{d}\phi+e_i\wedge \mathrm{d}\mathrm{}Y_\phi)\wedge \cdot\,,
\]
and
\[
{\bm{\mathrm{Y}}}_{\!\!\bm{\mathrm{p}}_{\phi}}(\cdot)=\int_\Sigma (-e_i\wedge DY_A^i-Y_e^i\wedge F_i)\cdot\,.
\]

As mentioned before, in the GNH approach, one works directly on the primary constraint submanifold $\mathsf{M}_0$, which, in this case as well, can be immediately identified with the configuration space $Q$. Thus, the vector fields on $\mathsf{M}_0$ have the form
\[
\mathbb{Y}_0=(Y_{e_{\mathrm{t}}}^i, Y_e^i, Y_{A_{\mathrm{t}}}^i, Y_A^i, Y_\phi)\,,
\]
and the pullback of $\Omega$ to $\mathsf{M}_0$ acting on these vector fields is
\begin{align}\label{pullback_omega}
\begin{split}
\omega(\mathbb{Z}_0,\mathbb{Y}_0)=\int_\Sigma\,\Big(&Y_e^i\wedge(\varepsilon_{ijk}e^j\wedge Z_A^k+\mathrm{d}\phi\wedge Z_{\!Ai}-F_iZ_\phi)\\
&\!\!+Y_A^i\wedge(-\varepsilon_{ijk}e^j\wedge Z_e^k+\mathrm{d}\phi\wedge Z_{ei}-Z_\phi De_ i)\\
&\!\!+Y_\phi( F_i\wedge Z_e^i+De_i\wedge Z_A^i)\Big)\,.
\end{split}
\end{align}
We can now solve $\imath_{\mathbb{Z}_0}\omega=\dd H$ by equating the terms proportional to the different components of the vector field $\mathbb{Y}$ in \eqref{dH} and \eqref{pullback_omega} and get the secondary constraints
\begin{align}
&\varepsilon_{ijk}e^j\wedge F^k+\mathrm{d}\phi\wedge F_i=0\,,\label{L1}\\
&\varepsilon_{ijk}e^j\!\wedge\! De^k\!-\mathrm{d}\phi\!\wedge\! De_i=0\,,\label{L2}
\end{align}
and the following equations for the components of the Hamiltonian vector field $\mathbb{Z}_0$
\begin{align}
  & (\varepsilon_{ijk}e^j+\delta_{ik}\mathrm{d}\phi)\wedge(Z_A^k-DA_{\mathrm{t}}^k)=(\varepsilon_{ijk}e_{\mathrm{t}}^j+\delta_{ik}Z_\phi)F^k\,,\label{E1}\\
  & (\varepsilon_{ijk}e^j-\delta_{ik}\mathrm{d}\phi)\wedge(Z_e^k-De_{\mathrm{t}}^k-\varepsilon^k_{\phantom{k}\ell m}e^\ell A_{\mathrm{t}}^m)=(\varepsilon_{ijk}e_{\mathrm{t}}^j-\delta_{ik}Z_\phi)De^k,\label{E2}\\
  & De_i\wedge (Z_A^i-DA_{\mathrm{t}}^i)+F_i\wedge(Z_e^i-De_{\mathrm{t}}^i-\varepsilon^i_{\phantom{i}jk}e^j A_{\mathrm{t}}^k)=0\,.\label{E3}
\end{align}
In terms of the objects
\begin{align}
X_A^i&:=Z_A^i-DA_{\mathrm{t}}^i\,,\\
X_e^i&:=Z_e^i-De_{\mathrm{t}}^i-\varepsilon^i_{\phantom{i}jk}e^jA_{\mathrm{t}}^k\,,
\end{align}
the preceding equations become
\begin{align}
  & (\varepsilon_{ijk}e^j+\delta_{ik}\mathrm{d}\phi)\wedge X_A^k=(\varepsilon_{ijk}e_{\mathrm{t}}^j+\delta_{ik}Z_\phi)F^k\,,\label{EE1}\\
  & (\varepsilon_{ijk}e^j-\delta_{ik}\mathrm{d}\phi)\wedge X_e^k=(\varepsilon_{ijk}e_{\mathrm{t}}^j-\delta_{ik}Z_\phi)De^k,\label{EE2}\\
  & De_i\wedge X_A^i+F_i\wedge X_e^i=0\,.\label{EE3}
\end{align}
As in the standard HK model, there are no conditions on $Z^i_{e_{\mathrm{t}}}$ and $Z^i_{A_{\mathrm{t}}}$. This implies that the evolution of $e_{\mathrm{t}}^i$ and $A_{\mathrm{t}}^i$ is arbitrary. At this point the preceding equations must be solved and the consistency of the dynamics must be checked by showing that the Hamiltonian vector fields are tangent to the constraints. We leave some of the details of this analysis for Appendix \ref{appendix_details_EHK}.

Now, as we did in the case of the standard HK model, we will use the results of the GNH analysis to find a convenient set of new variables to describe the EHK dynamics.

The Hamiltonian formulation obtained above is formulated on the infinite di\-men\-sio\-nal manifold $\mathsf{M}_0$ where the fields $(e_{\mathrm{t}}^i,e^i,A_{\mathrm{t}}^i,A^i,\phi)$ live. The vector fields in $\mathsf{M}_0$ have components $\mathbb{Y}_0=(Y_{e_{\mathrm{t}}}^i, Y_e^i, Y_{A_{\mathrm{t}}}^i, Y_A^i, Y_\phi)$. The pre-symplectic 2-form on $\mathsf{M}_0$ can be written as
\begin{equation}\label{omega_canonical_transf}
\omega=\int_\Sigma \left(\dd A^i\ww\dd\Big(\frac{1}{2}\varepsilon_{ijk}e^j\wedge e^k+e_i\wedge \mathrm{d}\phi\Big)-\dd\phi \ww\dd\left(e_i\wedge F^i\right)\right)\,,
\end{equation}
which on vector fields $\mathbb{Y},\mathbb{Z}$ in $\mathsf{M}_0$ gives \eqref{pullback_omega}. The secondary constraints are
\begin{align*}
&\varepsilon_{ijk}e^j\wedge F^k+\mathrm{d}\phi\wedge F_i=0\,,\\
&\varepsilon_{ijk}e^j\!\wedge\! De^k\!-\mathrm{d}\phi\wedge\! De_i=0\,,
\end{align*}
and the components of the Hamiltonian vector fields
\begin{alignat*}{3}
  & Z_e^i = &&\, De_{\mathrm{t}}^i-\varepsilon^{ijk}A_{\mathrm{t}j}e_k+E^i_{\phantom{i}j}e^j\,, \\
  & Z_A^i = &&\, DA_{\mathrm{t}}^i+W^i_{\phantom{i}j}e^j\,, \\
  & Z_{e_{\mathrm{t}}}^i &&\,\mathsf{arbitrary}\,, \\
  & Z_{A_{\mathrm{t}}}^i &&\,\mathsf{arbitrary}\,,  \\
  & Z_\phi &&\, \mathsf{arbitrary}\,,
\end{alignat*}
with $W_{ij}$ given by \eqref{solution_W} and \eqref{M}, and $E_{ij}$ by \eqref{solution_E} and \eqref{N}.

Now, we will find a Hamiltonian formulation for the present model in terms of canonically conjugate variables in a manner similar to the one described in \cite{BarberoG:2023qih} for the self-dual action. In view of the form of \eqref{omega_canonical_transf} and the discussion presented in Section \ref{sec_Hamiltonian_HK}, it is natural to introduce the following objects
\begin{align}
H_i&:=\frac{1}{2}\varepsilon_{ijk}e^j\wedge e^k+e_i\wedge \mathrm{d}\phi\in \Omega^2(\Sigma)\,,\label{def_H}\\
\pi&:=-e_i\wedge F^i\in\Omega^3(\Sigma)\,,\label{def_pi}
\end{align}
and, with the help of a fiducial volume form $\mathsf{vol}_0$, define
\begin{align}
  \widetilde{H}^i & :=\left(\frac{\cdot\wedge H^i}{\mathsf{vol}_0}\right)\in \mathfrak{X}(\Sigma)\,,\label{def_Htilde} \\
  \widetilde{\pi}&:=\left(\frac{\pi}{\mathsf{vol}_0}\right)\in C^\infty(\Sigma)\,. \label{def_pitilde}
\end{align}

We can view now $\mathsf{M}_0$ as a submanifold $\widetilde{\mathsf{M}}_0$ of $\widetilde{\Gamma}:=C^\infty(\Sigma) \times \mathfrak{X}(\Sigma)^3 \times C^\infty(\Sigma)^3\times \Omega^1(\Sigma)^3\times C^\infty(\Sigma)\times C^\infty(\Sigma)$. To this end, we introduce the injective map (hence, a bijection onto its image)
\[
J:\mathsf{M}_0\rightarrow \widetilde{\Gamma}:(e_{\mathrm{t}}^i,e^i,A_{\mathrm{t}}^i,A^i,\phi)\mapsto(e_{\mathrm{t}}^i,\widetilde{H}^i,A_{\mathrm{t}}^i,A^i,\phi,\widetilde{\pi})\,,
\]
with $\widetilde{H}^i$ and $\widetilde{\pi}$ given by \eqref{def_Htilde} and \eqref{def_pitilde}, which we can think of as a parametrization of the embedding of $\mathsf{M}_0$ into $\widetilde{\Gamma}$ defined by $J$. Our task now is to describe our system in the setting provided by $\widetilde{\Gamma}$.

To begin with, it is straightforward to see that the pre-symplectic form becomes
\begin{equation}\label{omega_tilde}
\widetilde{\omega}=\int_\Sigma\left(\dd A^i\ww\dd \widetilde{H}_i+\dd\phi\ww\dd \widetilde{\pi}\right)\mathsf{vol}_0\,,
\end{equation}
or, in other words, $J^*\widetilde{\omega}=\omega$. This can be seen immediately by plugging \eqref{def_Htilde} and \eqref{def_pitilde} into \eqref{omega_tilde}. This is just a pre-symplectic form because neither $A_{\mathrm{t}}^i$ nor $e_{\mathrm{t}}^i$ appear in it. However, it has the standard ``canonical'' form in the pairs of conjugate variables $(A^i, \widetilde{H}^i)$ and $(\phi,\widetilde{\pi})$.

Next, we will write the constraints in terms of the new variables introduced above. There are, in fact, two types of constraints that we have to consider: the one given by \eqref{def_pi}, and the secondary constraints \eqref{L1}-\eqref{L2}. Notice that, loosely speaking, whereas a counting of the number of independent field components allows us to interpret the pair $(\widetilde{H}^i,\phi)$ as a simple redefinition of $(e^i,\phi)$ on account of \eqref{def_Htilde}, the expression involving $\pi$ must be kept as the condition that defines $J(\mathsf{M}_0)$ as a proper submanifold of $\widetilde{\Gamma}$.

The condition \eqref{def_pi} can be written as
\begin{equation}\label{constr_pi}
2\widetilde{\pi}-\big((\mathrm{det} h)^2-({\widetilde{H}_\ell}\iprod\mathrm{d}\phi)({\widetilde{H}^\ell}\iprod\mathrm{d}\phi)\big)^{-1/2}\varepsilon^{ijk}{\widetilde{H}_i}\iprod{\widetilde{H}_j}\iprod F_k=0\,,
\end{equation}
with
\[
(\mathrm{det}h)^2=-\frac{1}{3!}\varepsilon^{ijk}{\widetilde{H}_i}\iprod{\widetilde{H}_j}\iprod{\widetilde{H}_k}\iprod\mathsf{vol}_0\,.
\]
This corresponds to the scalar constraint appearing in \cite{BarberoG:1997nrd}. As we can see, the second term in \eqref{constr_pi} is proportional to the Hamiltonian constraint appearing in the Ashtekar formulation for Euclidean gravity. As a consequence, we expect that the field dynamics will contain contributions with a form similar to the one of Euclidean GR and some extra terms. This will be clearly seen when we look at the Hamiltonian vector fields.

We now write the constraint \eqref{L1} in terms of the new variables. This is simply
\begin{equation}\label{constr_vec}
  {\widetilde{H}_i}\iprod F^i-\widetilde{\pi}\mathrm{d}\phi=0\,.
\end{equation}
It has the form expected of a vector constraint with two terms generating diffeomorphisms on the $(A^i,\widetilde{H}^i)$ and $(\phi,\widetilde{\pi})$ variables, respectively. Finally, a computation essentially similar to the one leading to \eqref{const_st_Ei1}, turns \eqref{L2} into
\begin{equation}\label{constr_Gauss}
\mathrm{div}_0\widetilde{H}^i+\varepsilon^{ijk}{\widetilde{H}_k}\iprod A_j=0\,,
\end{equation}
which is the familiar Gauss law that generates internal $SO(3)$ gauge transformations.

To understand the dynamics of the model in terms of the variables introduced in this section we have to write $Z_A^i$ and $Z_{\widetilde{H}}^i$ in terms of them. In the case of $Z_A^i$ the computation simply amounts to rewriting it in terms of the $h_i$ and $^h \mathbb{F}_{ij}$ defined in Appendix \ref{appendix_identities} and also in terms of
\begin{equation}\label{lapseshift}
\widehat{e}_{\mathrm{t}i}:=\frac{e_{\mathrm{t}i}+Z_\phi\alpha_i-\varepsilon_{ijk}e_{\mathrm{t}}^j\alpha^k}{\sqrt{1+\alpha^2}}\,,\quad \widehat{\alpha}_{\mathrm{t}}:=\frac{Z_\phi-e_{\mathrm{t}}\cdot\alpha}{\sqrt{1+\alpha^2}}\,.
\end{equation}
The result is
\begin{align}
Z_A^i&=DA_{\mathrm{t}}^i-\widehat{\alpha}_{\mathrm{t}} {^h \mathbb{F}^{ij}h_j}+{^h \mathbb{F}}^i_{\phantom{i}j}h_k\varepsilon^{kj\ell}  \widehat{e}_{\mathrm{t}\ell}+\widehat{\alpha}_{\mathrm{t}} {^h\mathbb{F}}\Big(-\varepsilon^{ijk}\alpha_j+\frac{1}{2}(1-\alpha^2)\delta^{ik}-\alpha^i\alpha^k\Big)h_k\label{ZAh}\\
     &=\pounds_{\xi}A^i+D(A_{\mathrm{t}}^i-{\xi}\iprod A^i)-\widehat{\alpha}_{\mathrm{t}} {^h \mathbb{F}^{ij}h_j}+\widehat{\alpha}_{\mathrm{t}} {^h\mathbb{F}}\Big(-\varepsilon^{ijk}\alpha_j+\frac{1}{2}(1-\alpha^2)\delta^{ik}-\alpha^i\alpha^k\Big)h_k \,,\nonumber
\end{align}
with $\xi\in\mathfrak{X}(\Sigma)$ the unique vector field satisfying ${\xi}\iprod h^i=\widehat{e}_{\mathrm{t}}^{\,\,i}$ or, equivalently, $\xi\iprod e^i=e_{\mathrm{t}}^i$ as shown in Appendix \ref{uniqueness_xi}. In the previous expression $h_i$ should be understood as written in terms of $\widetilde{H}^i$. As we can see the evolution of $A^i$ given by $Z_A^i$ consists of two contributions: one proportional to $\widehat{\alpha}_{\mathrm{t}} {^h\mathbb{F}}$ and the rest, that corresponds \textit{exactly} to the dynamics of the Ashtekar connection in full Euclidean GR as defined by the lapse $\widehat{\alpha}_{\mathrm{t}}$ and the shift $\xi$.

Let us consider now $Z_{\widetilde{H}}^i$. From the fact that $H_i=\widetilde{H}_i\iprod \mathsf{vol}_0$ we deduce that $Z_{Hi}$ and $Z_{\widehat{H}i}$ are related by
\[
Z_{Hi}=Z_{\widetilde{H}i}\iprod \mathsf{vol}_0\,,
\]
from which $Z_{\widetilde{H}i}$ can be written in terms of $Z_{Hi}$. The computation of $Z_{Hi}$ can be sketched as follows. From the definition of $H_i$ [equation \eqref{def_H}] we find
\begin{equation}\label{ZHi}
Z_{Hi}=\varepsilon_{ijk}e^j\wedge Z_e^k+Z_{ei}\wedge \mathrm{d}\phi+e_i\wedge \mathrm{d}Z_\phi\,.
\end{equation}
Now, from \eqref{E2} we have
\begin{equation*}
\varepsilon_{ijk}e^j\wedge Z_e^k+Z_{ei}\wedge \mathrm{d}\phi=(\varepsilon_{ijk}e^j-\delta_{ik}\mathrm{d}\phi)\wedge(De_{\mathrm{t}}^k+\varepsilon^{k\ell m}e_\ell A_{\mathrm{t}m})+(\varepsilon_{ijk}e_{\mathrm{t}}^j-\delta_{ik}Z_\phi)De^k\,,
\end{equation*}
which, plugged into \eqref{ZHi} gives
\[
Z_{Hi}=D(\varepsilon_{ijk}e_{\mathrm{t}}^je^k-Z_\phi e_i+e_{\mathrm{t}i}) \mathrm{d}\phi -\varepsilon_{ijk}A_{\mathrm{t}}^jH^k\,.
\]
Writing now $e_i$ in terms of $h_i$, which can be expressed in terms of $H_i$ as
\[
h_i=-\frac{1}{2\mathrm{det}h}\varepsilon_{ijk}\widetilde{H}^j\iprod \widetilde{H}^k\iprod\mathsf{vol}_0\,,
\]
we arrive at
\begin{equation}\label{ZHi2}
Z_{Hi}=D(\varepsilon_{ijk}\widehat{e}_{\mathrm{t}}^{\,\,j}h^k)-D(\widehat{\alpha}_{\mathrm{t}}h_i)-\varepsilon_{ijk}A_{\mathrm{t}}^jH^k=\pounds_{\xi}H_i-\varepsilon_{ijk}(A_{\mathrm{t}}^j-\xi\iprod\! A^j)H^k-D(\widehat{\alpha}_{\mathrm{t}}h_i) \,.
\end{equation}
This is, \textit{precisely}, (see \cite{BarberoG:2023qih}) the evolution of $H_i$ in Euclidean GR defined by the lapse $\widehat{\alpha}_\mathrm{t}$ and the shift $\widehat{e}_{\mathrm{t}}^{\,\,i}$ given by \eqref{lapseshift}.

As far as the rest of the components of the Hamiltonian vector fields are concerned, we still have that $Z_{e_{\mathrm{t}}}$, $Z_{A_{\mathrm{t}}}$ and $Z_\phi$ are arbitrary and, from \eqref{def_pi} and \eqref{def_pitilde}, we can obtain $Z_{\widetilde{\pi}}$ in terms of $Z_A^i$ and $Z_{\widetilde{H}}^i$.

To interpret and understand the dynamics of the Extended HK model discussed above, one could take different approaches. One possibility would be to impose the gauge fixing condition $\phi=0$ (which would lead to the consistency requirement $Z_\phi=0$). By doing this, we would have $\alpha_i=0$, and recover the standard HK Hamiltonian formulation. Note, however, that this procedure is somehow unsatisfactory because the absence of $\phi$ would preclude us from building non-degenerate 4-dimensional metrics, which, as commented in \cite{BarberoG:1997nrd}, is one of the reasons to include the $\phi$ field as a time variable.

Another interesting possibility is to take advantage of the arbitrariness of $Z_\phi$ and choose $Z_\phi=e_{\mathrm{t}}\cdot\alpha$. This would lead to
\begin{equation*}
\widehat{e}_{\mathrm{t}i}=\frac{e_{\mathrm{t}i}+(e_{\mathrm{t}}\cdot\alpha)\alpha_i-\varepsilon_{ijk}e_{\mathrm{t}}^j\alpha^k}{\sqrt{1+\alpha^2}}\,,\quad \widehat{\alpha}_{\mathrm{t}}=0\,.
\end{equation*}
If we now plug $\widehat{\alpha}_{\mathrm{t}}=0$ into the expressions for $Z_A^i$ and $Z_{H}^i$ we get
\begin{align*}
  & Z_A^i=\pounds_\xi A^i+D(A_{\mathrm{t}}^i-\xi\iprod A^i)\,, \\
  & Z_H^i=\pounds_\xi H^i-\varepsilon^{ijk}(A_{\mathrm{t}j}-\xi\iprod A_j)H_k\,,
\end{align*}
with the vector field $\xi\in\mathfrak{X}(\Sigma)$ satisfying $\xi\iprod e^i=e_{\mathrm{t}}^i$ and, crucially, also $\xi\iprod h^i=\widehat{e}_{\mathrm{t}}^{\,\,i}$. As we can see we recover the usual dynamics of the standard HK model for the variables $A^i$ and $H^i$ (or, equivalently, $\widehat{H}^i$), although it must be remembered that the constraints are \eqref{constr_vec} and \eqref{constr_Gauss}. Several comments are in order now:

Strictly speaking, the choice $Z_\phi=e_{\mathrm{t}}\cdot\alpha$ is not a gauge fixing but, rather, a particular way of specifying the, otherwise arbitrary, dynamics of $\phi$. Gauge fixing conditions restrict the dynamics of singular systems to submanifolds of the phase space with the hope of avoiding redundancies in the specification of the physically relevant configurations. Very often (partial gauge fixings), they ``reduce the dimensionality'' of the gauge orbits. Of course, once a good gauge fixing condition is introduced, the arbitrary components of the Hamiltonian vector fields must be adjusted in such a way that these are still tangent to the submanifold in phase space determined by the gauge fixing. Here we have followed a slightly different approach: choosing directly some of the arbitrary components of the Hamiltonian vector fields as a way of fixing the corresponding fields at any time once their initial values are given. In fact, if we notice that
\[
\pounds_\xi\phi=\xi\iprod\mathrm{d}\phi=(\xi\iprod e^i)\alpha_i =e_{\mathrm{t}}\cdot\alpha=Z_\phi\,,
\]
we see that the dynamics of the scalar field $\phi$ is given by $\dot{\phi}=\pounds_\xi \phi$, i.e. the field $\phi$ is Lie-dragged along the direction defined by the vector field $\xi$. As a consequence, if $\phi$ is nowhere zero at some ``moment'' $t_0$, it will remain nowhere zero for every $t$. This avoids the problems associated with the gauge fixing $\phi=0$ mentioned above.

It is natural to wonder what happens if we insert the choice $Z_\phi=e_{\mathrm{t}}\cdot\alpha$ into \eqref{solution_W} and \eqref{solution_E} (i.e. before introducing any new variables). The result, as expected, is that one gets the dynamics of the standard HK model \eqref{Z_st} with the constraints \eqref{L1} and \eqref{L2}.

%
%
\section{Conclusions and comments}\label{sec_conclusions}

As stated in the introduction, the main goal of the paper is to study in detail the dynamics of the Extended Husain-Kucha\v{r} model originally presented in \cite{BarberoG:1997nrd} by disentangling the form of its Hamiltonian vector fields. One of our motivations is to see if a gauge fixing condition such as $\phi=0$ could be avoided. The reason for this is similar to the one invoked in the ADM formalism: even if the lapse and shift are arbitrary variables, the lapse must always be taken to be different from zero in order to build four-dimensional non-degenerate metrics from the dynamical three metrics. Here we have found a way to avoid similar problems in the context of the HK model by relying on insights gained in the derivation of the Ashtekar formulation from the self-dual action without using the time gauge \cite{BarberoG:2023qih}. In particular, the introduction of a redefined lapse and shift according to \eqref{lapseshift} leads in a neat way to the interpretation of the Extended model as the standard one.

We would like to make some comments. The first one is related to the way the Dirac approach to constrained systems is often used. In most of the literature (and certainly in Dirac's book \cite{Dirac}), the main goal is quantization, for which the constraints play a central role. Although the dynamics can be explicitly derived by computing the Poisson brackets of the relevant Hamiltonians (incorporating the primary constraints), it is seldom discussed in much detail, in particular in the context of the Ashtekar formulation that is usually derived from actions (self-dual \cite{Samuel:1987td,Jacobson:1987yw} or Holst \cite{Holst:1995pc}) by using the time gauge.

One of the reasons to proceed in the way followed by Dirac is to take advantage of the canonical symplectic form because it can be easily quantized. In this regard, the GNH approach that we have used here is less suited because, generically, the pullback of the canonical symplectic form to the primary constraint submanifold is degenerate and does not have the ``canonical form''. This notwithstanding, in the examples discussed here (and also in \cite{BarberoG:2023qih}), this is not much of a problem because it is either possible to find variables in terms of which the final phase space carries a symplectic structure with a canonical form (self-dual Euclidean gravity or the standard HK model), or it is possible to extend it in a simple way in order to achieve this goal (the EHK model). From a conceptual perspective, it is important to remember that some variables with arbitrary dynamics (such as $A_{\mathrm{t}}^i$ and $e_{\mathrm{t}}^i$ in the models discussed in the paper) can consistently be interpreted as ``external fields'' that can be arbitrarily specified. By proceeding in this way, they can be removed from the definition of the configuration space so that their absence in the pullback of the symplectic to the primary constraint submanifold does not render it degenerate. On the other hand, as the EHK model shows, it is possible to have dynamical variables such as $\phi$, with arbitrary dynamics, explicitly appearing in the symplectic structure.

We have insisted on the fact that, by looking at the form of the pullback $\omega$ of the canonical symplectic form $\Omega$ onto the primary constraint submanifold $\mathsf{M}_0$, it is possible to guess new variables that simplify both the form of $\omega$ and the constraints. With hindsight, it can be argued that these variables can be directly guessed by looking at the fiber derivative and, hence, used in the framework of Dirac's approach. This is certainly true and shows that the Dirac and GNH points of view are actually quite close. Notice, however, that despite the appearances, the actual implementation of the GNH method is often much shorter than the one of Dirac's algorithm (the sad reason is that the necessary consistency requirements are seldom checked, and the final form of the total Hamiltonian is often not given).

One of the interesting aspects of the approach that we followed in \cite{BarberoG:2023qih} and in this paper is that it provides a systematic way to arrive at the Ashtekar formulation with minimal guesswork (in particular, gauge fixing can be avoided). In our opinion, it can lead to new Hamiltonian formulations for background independent models and gravitational theories which may be interesting both from the classical and quantum perspectives.

The most immediate application of the results and methods discussed here is the introduction of a time variable in full GR by copying the mechanism used to arrive at \eqref{EHK_action}. Notice that by performing integration by parts in its second term (see the discussion in \cite{BarberoG:1997nrd}), this term can be written as the product of a Lagrange multiplier times an object that is zero as a consequence of the field equations derived from the action given by just the first term in \eqref{EHK_action} [i.e., the action for the standard HK model \eqref{HK_action}]. By following a similar approach, it may be possible to introduce time variables in the Euclidean self-dual or Holst actions. We plan to look at this problem in the near future.

%
%
\section*{Acknowledgments}

This work has been supported by the Spanish Ministerio de Ciencia Innovaci\'on y Universidades-Agencia Estatal de Investigaci\'on grant AEI/PID2020-116567GB-C22. E.J.S. Villase\~nor is supported by the Madrid Government (Comunidad de Madrid-Spain) under the Multiannual Agreement with UC3M in the line of Excellence of University Professors (EPUC3M23), and in the context of the
V PRICIT (Regional Programme of Research and Technological Innovation).

\begin{appendices}

%
%

\section{Notation and definitions}\label{appendix_notation}

We summarize here the conventions and notations used in the paper. Many of the objects that we employ are differential forms, so we will avoid the use of space and spacetime indices. However, the geometric objects that we use carry internal indices that we will keep explicit. These are written with Latin letters and take values $1,2,3$. They will be raised and lowered with the Euclidean diagonal metric $\mathrm{Diag}(+1,+1,+1)$. We will make extensive use of the 3-dimensional Levi-Civita totally antisymmetric symbol $\varepsilon_{ijk}$.

The exterior differential in three and four dimensions will be respectively denoted by $\mathrm{d}$ and $\bd$, and the exterior differential in phase space as $\dd$. The exterior product of forms in space or spacetime will be denoted by $\wedge$, whereas in phase space, it will be denoted by $\w$.

Interior products of vector fields and forms in space or spacetime will be denoted by $\iprod$ (for instance, $\xi\iprod \beta$ for a vector field $\xi$ and a 1-form $\beta$ on the spatial manifold $\Sigma$). Interior products in phase space will be denoted by $\imath$ (for instance, $\imath_{\mathbb{Y}}\Omega$ for a vector field $\mathbb{Y}$ and a 2-form $\Omega$).

If $\bm\alpha$ and $\mathsf{vol}$ are respectively a top form and a volume form in a manifold $\mathcal{M}$, we denote as
\[
\left(\frac{\bm\alpha}{\mathsf{vol}}\right)
\]
the unique scalar field such that
\[
\bm\alpha=\left(\frac{\bm\alpha}{\mathsf{vol}}\right)\mathsf{vol}\,.
\]

Throughout the paper, we make use of a fiducial volume form $\mathsf{vol}_0$ (remember that we restrict ourselves to orientable spatial manifolds $\Sigma$). We will always choose it in such a way that $\mathrm{det}e:=\left(\frac{\mathsf{vol}_e}{\mathsf{vol}_0}\right)>0$ and $\mathrm{det}h:=\left(\frac{\mathsf{vol}_h}{\mathsf{vol}_0}\right)>0$.

\section{Some useful identities}\label{appendix_identities}

In this appendix, we define a number of auxiliary geometric objects necessary to understand the dynamics of the EHK model and list some useful properties of them. We start by reminding the readers that we expand $\mathrm{d}\phi=\alpha^ie_i$ and write $\alpha^2:=\alpha_i\alpha^i$.

We define the 2-forms $H_i$ ($i=1,2,3$) as
\begin{equation*}
  H_i:=\frac{1}{2}\varepsilon_{ijk}e^j\wedge e^k+e_i\wedge \mathrm{d}\phi\,.
\end{equation*}
It is useful to introduce auxiliary triads $h_i\in\Omega^2(\Sigma)$ such that $H_i=\frac{1}{2}\varepsilon_{ijk}h^j\wedge h^k$. In terms of $e_i$ and $\mathrm{d}\phi$ (equivalently, $\alpha_i$) they are given by
\begin{equation}\label{eq_app_1}
  h^i=\frac{1}{\sqrt{1+\alpha^2}}(\delta^i_{\phantom{i}k}+\alpha^i\alpha_k+\varepsilon^{ij}_{\phantom{ij}k}\alpha_j)e^k\,.
\end{equation}
Reciprocally, we have
\begin{equation*}
e_i=\frac{1}{\sqrt{1+\alpha^2}}(\delta_i^{\phantom{i}j}+\varepsilon_i^{\phantom{i}jk}\alpha_k)h_j\,.
\end{equation*}
With the help of the $h_i$ we define the 3-form
\begin{equation*}
 \frac{1}{3!}\varepsilon_{ijk}h^i\wedge h^j\wedge h^k,
\end{equation*}
that we will denote as $\mathsf{vol}_h$ because it is a volume form if and only if $\mathsf{vol}_e$ (defined in the main text) is also a volume form. Indeed, a straightforward computation gives
\begin{equation*}
\mathsf{vol}_h=\sqrt{1+\alpha^2}\,\mathsf{vol}_e\,.
\end{equation*}
The vector fields $\widetilde{H}_i$ used throughout the paper are defined by duality in terms of $H_i$ (and, also, $h_i$) according to
\begin{equation*}
\widetilde{H}_i:=\left(\frac{\cdot\wedge H_i}{\mathsf{vol}_0}\right)\,.
\end{equation*}
Some useful properties of these objects are:
\begin{align*}
  & {\widetilde{H}_i}\iprod\mathsf{vol}_0=H_i\,, \\
  & {\widetilde{H}_i}\iprod h^j=(\mathrm{det}h)\delta_i^{\phantom{i}j}\,, \\
  & (\mathrm{det}h)^2=-\frac{1}{3!}\varepsilon_{ijk}\widetilde{H}^i\iprod \widetilde{H}^j\iprod \widetilde{H}^k \iprod\mathsf{vol}_0\,,\\
  & {\widetilde{H}_i}\iprod e^j=\frac{\mathrm{det}h}{\sqrt{1+\alpha^2}}(\delta_i^{\phantom{i}j}-\varepsilon_i^{\phantom{i}jk}\alpha_k)\,, \\
  & \pounds_{\widetilde{H}_i}\phi={\widetilde{H}_i}\iprod\mathrm{d}\phi=\frac{\mathrm{det}h}{\sqrt{1+\alpha^2}}\alpha_i\,, \\
  & \sqrt{1+\alpha^2}=\frac{\mathrm{det}h}{\sqrt{(\mathrm{det}h)^2-(\pounds_{\widetilde{H}_i}\phi)(\pounds_{\widetilde{H}^i}\phi)}}\,.
\end{align*}

In analogy with the definition of $\mathbb{F}_{ij}$ and $\mathbb{B}_{ij}$ we introduce
\begin{equation*}
  ^h\mathbb{F}_{ij}:=\left(\frac{F_i\wedge h_j}{\mathsf{vol}_h}\right)\,,\quad ^h\mathbb{B}_{ij}:=\left(\frac{Dh_i\wedge h_j}{\mathsf{vol}_h}\right)\,.
\end{equation*}
The relation between $\mathbb{F}_{ij}$ and $^h\mathbb{F}_{ij}$ can be found by using \eqref{eq_app_1}:
\begin{equation*}
  \mathbb{F}_{ij}={^h\mathbb{F}}_{ij}+\varepsilon_{jk\ell} {^h\mathbb{F}}_i^{\phantom{i}k}\alpha^\ell\qquad\longrightarrow\qquad
^h\mathbb{F}:={^h\mathbb{F}}^k_{\phantom{k}k}=\frac{\mathbb{F}}{1+\alpha^2}\,.
\end{equation*}
In terms of $^h\mathbb{F}_{ij}$ and $^h \mathbb{B}_{ij}$ the secondary constraints of the EHK model can be written in the form
\begin{align*}
  & \alpha_i\, ^h\mathbb{F}-\varepsilon_{ijk} {^h\mathbb{F}^{jk}}=0\,, \\
  & \varepsilon_{ijk} {^h\mathbb{B}^{jk}}=0\,.
\end{align*}

\section{Some mathematical results}\label{appendix_math_details}

\subsection{Coframes and volume forms}\label{coframes_volume}

Here we prove the following elementary lemma:

\medskip

\begin{lemma}\label{lemma volume form}
    Let $\mathcal{M}$ be a compact and parallelizable 4-dimensional manifold, then $(\ee^0,\ee^i)$ ($i=1,2,3$) is a coframe if and only if
\[
\mathsf{vol}_0:=\frac{1}{3!}\varepsilon_{ijk}\ee^0\wedge \ee^i\wedge \ee^j\wedge \ee^k
\]
is a volume form.
\end{lemma}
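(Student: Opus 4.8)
The plan is to reduce the equivalence to an elementary pointwise fact of linear algebra, after first simplifying $\mathsf{vol}_0$. First I would carry out the sum over the internal indices. Since $\varepsilon_{ijk}$ is totally antisymmetric, only the permutations of $(1,2,3)$ contribute, and for each permutation $\sigma$ one has $\varepsilon_{\sigma(1)\sigma(2)\sigma(3)}\,\ee^{\sigma(1)}\wedge\ee^{\sigma(2)}\wedge\ee^{\sigma(3)}=\sgn(\sigma)^2\,\ee^1\wedge\ee^2\wedge\ee^3=\ee^1\wedge\ee^2\wedge\ee^3$. Adding the $3!$ identical contributions and dividing by $3!$ gives the clean identity
\[
\mathsf{vol}_0=\ee^0\wedge\ee^1\wedge\ee^2\wedge\ee^3\,.
\]

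The core of the argument is then the standard fact that four covectors on a $4$-dimensional vector space are linearly independent if and only if their exterior product does not vanish. I would prove both directions briefly: if the $\ee^a$ ($a=0,1,2,3$) are linearly dependent at a point $p$, then one of them is a linear combination of the others, and multilinearity together with the antisymmetry of $\wedge$ forces $\ee^0\wedge\ee^1\wedge\ee^2\wedge\ee^3|_p=0$; conversely, if they are independent at $p$ they form a basis of $T_p^*\mathcal{M}$, so their wedge is a nonzero element of the one-dimensional space $\Lambda^4 T_p^*\mathcal{M}$, as one checks by evaluating it on the dual basis, where it takes the value $1$. Applying this at every $p\in\mathcal{M}$ shows that $(\ee^0,\ee^i)$ is linearly independent at $p$ exactly when $\mathsf{vol}_0|_p\neq 0$.

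To finish I would globalize the pointwise statement. By definition $(\ee^0,\ee^i)$ is a coframe precisely when these four $1$-forms are pointwise linearly independent at every point of $\mathcal{M}$, whereas $\mathsf{vol}_0$ is a volume form precisely when it is a nowhere-vanishing $4$-form. Since $\mathcal{M}$ is parallelizable it is in particular orientable, so a nowhere-vanishing top form does qualify as a volume form and the statement is well posed. Combining this with the pointwise equivalence yields both implications at once.

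The argument involves no genuine obstacle; it is elementary. The only points requiring care are the combinatorial normalization in the first step (checking that the factor $1/3!$ exactly cancels the six permutation contributions) and stating cleanly both directions of the linear-algebra lemma. The hypotheses play only a supporting role: parallelizability (hence orientability) is what makes the term ``volume form'' meaningful, while compactness is not actually needed for this particular proof.
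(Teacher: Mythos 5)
Your proof is correct and follows essentially the same route as the paper's: both directions reduce to the pointwise fact that four covectors on a $4$-dimensional space are independent iff their wedge is nonzero, with the nonvanishing checked by evaluating on the dual frame (the paper's interior-product computation $E_0\iprod E_1\iprod E_2\iprod E_3\iprod\mathsf{vol}_0=1$ is exactly your dual-basis evaluation) and the converse argued by the same contrapositive. Your preliminary simplification $\mathsf{vol}_0=\ee^0\wedge\ee^1\wedge\ee^2\wedge\ee^3$ and the remark that compactness is unused are correct but cosmetic differences.
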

\begin{proof}

\noindent $\boxed{\Rightarrow}$ If $(\ee^0,\ee^i)$ ($i=1,2,3$) is a coframe then, at every point $p\in \mathcal{M}$ the covectors $\ee^0(p)\,, \ee^i(p)$ ($i=1,2,3$) are linearly independent. Also, a frame $(E_0,E_i)$ ($i=1,2,3$) exists in such a way that, at every $p\in \mathcal{M}$, $(\ee^0(p)\,, \ee^i(p))$ ($i=1,2,3$) is the dual basis of $(E_0(p),E_i(p))$ ($i=1,2,3$) satisfying
\[
{E_0}\iprod\ee^0=1\,, {E_0}\iprod\ee^i=0\,, {E_i}\iprod\ee^0=0\,, {E_i}\iprod\ee^j=\delta_i^{\phantom{i}j}\,, (i,j=1,2,3)\,.
\]
Now a straightforward computation gives
\[
{E_0}\iprod{E_1}\iprod{E_2}\iprod{E_3}\iprod\mathsf{vol}_0=1\,,
\]
which is a constant, non-vanishing function on $\mathcal{M}$, hence, $\mathsf{vol}_0$ is a volume form as it vanishes nowhere on $\mathcal{M}$.

\medskip

\noindent $\boxed{\Leftarrow}$ If at $p\in \mathcal{M}$ the covectors $\ee^0(p)\,, \ee^i(p)$ ($i=1,2,3$) are linearly dependent then $\varepsilon_{ijk}\ee^0(p)\wedge \ee^i(p)\wedge \ee^j(p)\wedge \ee^k(p)=0$ and, hence, $\mathsf{vol}_0$ is not a volume form.
\end{proof}

\subsection{\texorpdfstring{A result on internal products involving $\tilde{u}_0$}{A result on internal products involving \textbackslash tilde\{u\}\_0}}\label{tildeu0}

\medskip

\begin{lemma}\label{int_prods} Let $(\bm{\xi}^k)\in\Omega^2(\mathcal{M})^3$, then $\varepsilon_{ijk}{\ee}^j\wedge {\bm{\xi}}^k=0$ implies $\tilde{u}_0\iprod {\bm{\xi}}^k=0$.\end{lemma}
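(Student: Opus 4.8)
The plan is to reduce the statement to a pointwise linear-algebra fact by contracting the hypothesis with $\tilde{u}_0$ and then exploiting that $(\bd t,\ee^i)$ is a coframe. First I would apply the interior product $\tilde{u}_0\iprod(\cdot)$ to the assumed identity $\varepsilon_{ijk}\ee^j\wedge\bm{\xi}^k=0$. Using the graded Leibniz rule together with $\tilde{u}_0\iprod\ee^j=0$ (established in the main text), each term $\tilde{u}_0\iprod(\ee^j\wedge\bm{\xi}^k)$ collapses to $-\ee^j\wedge(\tilde{u}_0\iprod\bm{\xi}^k)$, so the hypothesis yields
\[
\varepsilon_{ijk}\,\ee^j\wedge\beta^k=0,\qquad \beta^k:=\tilde{u}_0\iprod\bm{\xi}^k\in\Omega^1(\mathcal{M}).
\]
The goal then becomes to show that $\beta^k=0$.

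Next I would pin down the form of $\beta^k$. Since $\iprod$ of a vector field with itself vanishes, we have $\tilde{u}_0\iprod\beta^k=\tilde{u}_0\iprod\tilde{u}_0\iprod\bm{\xi}^k=0$. Expanding $\beta^k$ in the coframe as $\beta^k=c^k\,\bd t+d^k{}_m\,\ee^m$ and using $\tilde{u}_0\iprod\bd t=1$, $\tilde{u}_0\iprod\ee^m=0$, this forces $c^k=0$, so $\beta^k=d^k{}_m\,\ee^m$ for some functions $d^k{}_m$. The problem is now entirely algebraic: for the three $1$-forms $\beta^k$ spanned by $\ee^1,\ee^2,\ee^3$, the relation $\varepsilon_{ijk}\ee^j\wedge\beta^k=0$ should force the matrix $d^k{}_m$ to vanish.

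To extract $d^k{}_m$ I would substitute $\beta^k=d^k{}_m\ee^m$ and rewrite the wedge using the internal duality $\ee^j\wedge\ee^m=\varepsilon^{jmn}S_n$ with $S_n:=\tfrac12\varepsilon_{npq}\ee^p\wedge\ee^q$. Contracting the two Levi-Civita symbols via $\varepsilon_{ijk}\varepsilon^{jmn}=\delta_k^{\,m}\delta_i^{\,n}-\delta_k^{\,n}\delta_i^{\,m}$ gives
\[
\varepsilon_{ijk}\,d^k{}_m\,\ee^j\wedge\ee^m=\big(d^m{}_m\,\delta_i^{\,k}-d^k{}_i\big)\,S_k .
\]
Because $\ee^1,\ee^2,\ee^3$ are linearly independent (being part of the coframe), the $S_1,S_2,S_3$ are linearly independent $2$-forms, so the vanishing of the left-hand side forces $d^k{}_i=(d^m{}_m)\,\delta_i^{\,k}$, i.e. $d$ is a multiple of the identity. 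Taking the trace gives $d^k{}_k=3\,d^m{}_m$, hence $d^m{}_m=0$ and therefore $d^k{}_i=0$. Thus $\beta^k=\tilde{u}_0\iprod\bm{\xi}^k=0$, as claimed.

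The main obstacle I anticipate is purely bookkeeping: getting the sign from the graded Leibniz rule right in the first step and carrying out the $\varepsilon$–$\varepsilon$ contraction without index slips. Everything else is routine once one recognizes that contracting with $\tilde{u}_0$ turns the $2$-form hypothesis into a tractable $1$-form condition, and that the $S_n$ provide a convenient basis in which to read off the coefficients.
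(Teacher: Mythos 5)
Your proof is correct and takes essentially the same route as the paper's: contract the hypothesis with $\tilde{u}_0$ using $\tilde{u}_0\iprod\ee^j=0$, expand $\tilde{u}_0\iprod\bm{\xi}^k$ in the coframe and eliminate the $\bd t$ component via $\tilde{u}_0\iprod\tilde{u}_0\iprod\bm{\xi}^k=0$, then use the $\varepsilon$--$\varepsilon$ contraction and a trace to force the coefficient matrix to vanish. The only cosmetic difference is in the final coefficient extraction: the paper left-wedges with $\bd t\wedge\ee^m$ and reads off against $\mathsf{vol}_t$, while you read the coefficients off in the linearly independent basis $S_n=\tfrac12\varepsilon_{npq}\ee^p\wedge\ee^q$ of 2-forms, which yields the identical equation $d^k{}_i=(d^m{}_m)\delta_i^{\,k}$.
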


\begin{proof}

\noindent Since ${\tilde{u}_0}\iprod\bm{\mathrm{e}}^i=0$, we see that $\varepsilon_{ijk}\bm{\mathrm{e}}^j\wedge{\bm{\xi}}^k=0$ implies $\varepsilon_{ijk}\bm{\mathrm{e}}^j\wedge(\tilde{u}_0\iprod{\bm{\xi}}^k)=0$. We now show that this condition leads to ${\tilde{u}_0}\iprod{\bm{\xi}}^k=0$. Let us expand ${\tilde{u}_0}\iprod{\bm{\xi}}^k=\alpha^k_{\phantom{k}\ell}\ee^\ell+\alpha {\bm{\mathrm{d}}}t$. As ${\tilde{u}_0}\iprod{\tilde{u}_0}\iprod{\bm{\xi}}^k=0$ we see that $\alpha=0$ and, hence, we can simply write ${\tilde{u}_0}\iprod{\bm{\xi}}^k=\alpha^k_{\phantom{k}\ell}\ee^\ell$. Now
    \[
    0=\varepsilon_{ijk}{\bm{\mathrm{e}}}^j\wedge {\tilde{u}_0}\iprod{\bm{\xi}}^k=\varepsilon_{ijk}{\bm{\mathrm{e}}}^j\wedge\alpha^k_{\phantom{k}\ell}\ee^\ell\,.
    \]
   Left-wedge-multiplying the previous expression by ${\bm{\mathrm{d}}}t\wedge \bm{\mathrm{e}}^m$ we get
    \[
    0=\alpha^k_{\phantom{k}\ell}\varepsilon_{ijk}{\bm{\mathrm{d}}}t\wedge \bm{\mathrm{e}}^m\wedge\bm{\mathrm{e}}^j\wedge  \bm{\mathrm{e}}^\ell=\alpha^k_{\phantom{k}\ell} \varepsilon_{ijk}\varepsilon^{mj\ell}\mathsf{vol}_t =\delta^m_{\phantom{m}i}\alpha^p_{\phantom{p}p}-\alpha^m_{\phantom{m}i}\,.
    \]
    The trace of the previous equation gives $\alpha^p_{\phantom{p}p}=0$ and, hence, $\alpha^i_{\phantom{i}j}=0$, i.e. ${\tilde{u}_0}\iprod{\bm{\xi}}^k=0$.
\end{proof}

\subsection{Pullbacks of coframes}\label{coframes_pullbacks}

To state and prove the following lemma, we remind the reader that for a compact and orientable 3-dimensional manifold $\Sigma$, we introduce the four manifold $\mathcal{M}=\mathbb{R}\times\Sigma$ and the sheets of the canonical foliation of $\mathcal{M}$ given by $\Sigma_\tau:=\{\tau\}\times\Sigma$ ($\tau\in\mathbb{R}$). We have a real function $t\in C^\infty(\mathcal{M})$ and a vector field $\partial_t\in\mathfrak{X}(\mathcal{M})$. Also, for each $\tau\in\mathbb{R}$ we have the embedding $\jmath_\tau:\Sigma\rightarrow \mathcal{M}:p\mapsto (\tau,p)$.

\begin{lemma}\label{pullbacks}

Let us assume that $(\bd t,\ee^i)$ ($i=1,2,3$) is a coframe on $\mathcal{M}$. Then, for all $\tau\in\mathbb{R}$, the three 1-forms $\jmath^*_\tau \ee^i$ ($i=1,2,3$) define a coframe on $\Sigma$.
\end{lemma}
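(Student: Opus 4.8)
The plan is to reduce the claim to pointwise linear algebra at each point of the leaf $\Sigma_\tau$, exploiting that $\Sigma_\tau$ is a level set of $t$. The one structural identity driving everything is that $t\circ\jmath_\tau=\tau$ is constant on $\Sigma$, so that $\jmath_\tau^*\bd t=\mathrm{d}(t\circ\jmath_\tau)=0$. Dually, the injection $\mathrm{d}\jmath_\tau\colon T_p\Sigma\to T_{(\tau,p)}\mathcal{M}$ has image exactly $T_p\Sigma_\tau=\ker\bd t(\tau,p)$, a genuine hyperplane in $T_{(\tau,p)}\mathcal{M}$ because $\bd t(\tau,p)\neq 0$ (the assumed coframe $(\bd t,\ee^i)$ is nowhere vanishing). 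Since ``being a coframe on $\Sigma$'' is the pointwise statement that $\jmath_\tau^*\ee^1(p),\jmath_\tau^*\ee^2(p),\jmath_\tau^*\ee^3(p)$ are linearly independent in $T_p^*\Sigma$ for every $p$, it suffices to work at a single arbitrary $p\in\Sigma$; in particular the compactness and orientability hypotheses play no role in the argument.

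Then I would argue by contradiction. Suppose that at some $p$ there are constants $c_i$, not all zero, with $\sum_i c_i\,\jmath_\tau^*\ee^i(p)=0$. Because $\jmath_\tau^*$ is the transpose of $\mathrm{d}\jmath_\tau$, this is equivalent to saying that the covector $\omega:=\sum_i c_i\,\ee^i(\tau,p)\in T_{(\tau,p)}^*\mathcal{M}$ annihilates the image of $\mathrm{d}\jmath_\tau$, i.e.\ $\omega$ vanishes on $\ker\bd t(\tau,p)$. A covector vanishing on the kernel of the nonzero covector $\bd t(\tau,p)$ is necessarily a multiple of it, so $\omega=\lambda\,\bd t(\tau,p)$ for some $\lambda\in\mathbb{R}$. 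This yields the nontrivial relation $\lambda\,\bd t(\tau,p)-\sum_i c_i\,\ee^i(\tau,p)=0$ among $\bd t(\tau,p),\ee^1(\tau,p),\ee^2(\tau,p),\ee^3(\tau,p)$, which contradicts their linear independence at $(\tau,p)$. Hence the $\jmath_\tau^*\ee^i$ are pointwise independent at every $p$, proving the lemma.

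All steps are elementary; the only point needing care---the ``main obstacle''---is the duality bookkeeping of the middle step, namely that a pullback of covectors restricting to zero on $\Sigma$ is the same as the upstairs covector annihilating the hyperplane $T_p\Sigma_\tau$. An alternative that sidesteps this would use a volume-form criterion: by the three-dimensional analogue of Lemma~\ref{lemma volume form}, the $\jmath_\tau^*\ee^i$ form a coframe iff $\frac{1}{3!}\varepsilon_{ijk}\,\jmath_\tau^*\ee^i\wedge\jmath_\tau^*\ee^j\wedge\jmath_\tau^*\ee^k$ is a volume form on $\Sigma$; one checks, using $\jmath_\tau^*\bd t=0$, that this $3$-form equals $\jmath_\tau^*(\partial_t\iprod\mathsf{vol}_t)$, which is nowhere zero because $\mathsf{vol}_t$ is a volume form on $\mathcal{M}$ and $\partial_t\iprod\bd t=1$ makes $\partial_t$ transverse to each $\Sigma_\tau$. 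I would nevertheless present the contradiction argument as the primary proof, as it is shorter and avoids importing the three-dimensional version of Lemma~\ref{lemma volume form}.
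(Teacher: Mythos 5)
Your proof is correct, and it takes a dual route to the paper's. The paper argues directly: given smooth coefficients $\lambda_i$ with $\lambda_i\jmath_\tau^*\ee^i=0$, it extends them to functions $\bm{\lambda}_i$ on $\mathcal{M}$, notes that the hypothesis is equivalent to $V\iprod(\bm{\lambda}_i\ee^i)=0$ for every $V$ tangent to the leaves, and then contracts with the dual-frame vectors $E_i$ (which satisfy $E_i\iprod\bd t=0$, hence are tangent to every $\Sigma_\tau$) to read off $\bm{\lambda}_i=0$ immediately. You instead work pointwise and by contradiction: a vanishing pulled-back combination means the upstairs covector $\sum_i c_i\,\ee^i(\tau,p)$ annihilates $\mathrm{Im}\,\mathrm{d}\jmath_\tau=\ker\bd t(\tau,p)$, hence is a multiple of $\bd t(\tau,p)$, contradicting the linear independence of the coframe. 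Both arguments hinge on the same underlying fact --- that $T_{(\tau,p)}\Sigma_\tau=\ker\bd t(\tau,p)$, equivalently that the $E_i$ span the leaf's tangent space --- but your annihilator formulation buys a purely pointwise argument that dispenses with constructing the dual frame and with the mildly awkward extension of the coefficients $\lambda_i$ to $\mathcal{M}$, while the paper's contraction buys an explicit, constructive extraction of each coefficient rather than a contradiction. Your observation that compactness and orientability play no role is accurate, and your alternative route via $\jmath_\tau^*(\partial_t\iprod\mathsf{vol}_t)$ being nowhere zero (the three-dimensional analogue of Lemma~\ref{lemma volume form}) is also sound: evaluating on a basis $(v_1,v_2,v_3)$ of $T_p\Sigma_\tau$ gives $\mathsf{vol}_t(\partial_t,v_1,v_2,v_3)\neq 0$ since $\partial_t\iprod\bd t=1$ makes $(\partial_t,v_1,v_2,v_3)$ a basis of $T_{(\tau,p)}\mathcal{M}$.
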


\begin{proof}

Let us assume that, for all $\tau\in\mathbb{R}$, we have $\lambda_i\jmath_\tau^* \ee^i=0$ with $\lambda_i\in C^\infty(\Sigma)$. Let us also introduce the function
\[
{\bm{\lambda}}_i:\mathcal{M}\rightarrow \mathbb{R}:(\tau,p)\mapsto \lambda_i(p)\,.
\]
Now, $\lambda_i\jmath_\tau^*\ee^i(p)=0$ for every $p\in \Sigma$ is equivalent to $\jmath_\tau^*({\bm{\lambda}}_i\ee^i)(p)=0$ for every $p\in \Sigma$. This is also equivalent to $V\iprod({\bm{\lambda}}_i\ee^i)=0$ for every vector field $V\in\mathfrak{X}(\mathcal{M})$ such that $V\iprod\bd t=\pounds_V t=0$ (i.e. vector fields tangent to every $\Sigma_\tau$). In particular, this must be true for $E_j$, the element of the dual basis of $(\bd t,\ee^i)$ satisfying ${E_i}\iprod \ee^j=\delta_i^{\phantom{i}j}\,,{E_i}\iprod\bd t=0$. Hence, for $i=1,2,3$
\[
E_i \iprod({\bm{\lambda}}_j\ee^j)={\bm{\lambda}}_i=0\,,
\]
which implies $\lambda_i=0$, so that the $\jmath_\tau^*\ee^i(p)$ are linearly independent for all $p\in\Sigma_\tau$.
\end{proof}

\subsection{A useful result to interpret the field equations \texorpdfstring{\eqref{field_equations_EHK}}{2.10}}\label{two-forms}

Let us assume, as mentioned above, that the 1-forms $(\bd t,{\bm{\mathrm{e}}}^i)$ ($i=1,2,3$), are linearly independent everywhere on $\mathcal{M}$. This implies, in particular, that every 2-form on $\mathcal{M}$ can be expanded in terms of $\bd t\wedge {\bm{\mathrm{e}}}^i$ and ${\bm{\mathrm{e}}}^i\wedge {\bm{\mathrm{e}}}^j$. We now prove the following lemma (which generalizes the result proved in \cite{BarberoG:1997nrd}):

\begin{lemma}\label{useful_result}

Let ${\bm{\xi}}^k, {\bm{\zeta}}^k\in\Omega^2(\mathcal{M})$ ($k=1,2,3$). Then, if both $\varepsilon_{ijk}{\bm{\mathrm{e}}}^j\wedge{\bm{\xi}}^k-\bd \bm\phi\wedge{\bm{\xi}}_i=0$ and $\varepsilon_{ijk}{\bm{\mathrm{e}}}^j\wedge{\bm{\zeta}}^k+\bd \bm\phi\wedge{\bm{\zeta}}_i=0$ hold for $i=1,2,3$, and $\bm\phi\in C^\infty(\Sigma)$, we also have ${\bm{\xi}}_k\wedge{\bm{\zeta}}^k=0$.
\end{lemma}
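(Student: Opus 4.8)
The plan is to work relative to the coframe $(\bd t,\ee^i)$ using the distinguished vector field $\tilde u_0$, which satisfies $\tilde u_0\iprod\ee^i=0$ and $\tilde u_0\iprod\bd t=1$. Since ${\bm{\xi}}_k\wedge{\bm{\zeta}}^k$ is a $4$-form on the $4$-manifold $\mathcal M$, one has $\bd t\wedge({\bm{\xi}}_k\wedge{\bm{\zeta}}^k)=0$, so ${\bm{\xi}}_k\wedge{\bm{\zeta}}^k=\bd t\wedge\big(\tilde u_0\iprod({\bm{\xi}}_k\wedge{\bm{\zeta}}^k)\big)$ and it suffices to prove $\tilde u_0\iprod({\bm{\xi}}_k\wedge{\bm{\zeta}}^k)=0$. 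Setting ${\bm{\mu}}^k:=\tilde u_0\iprod{\bm{\xi}}^k$ and ${\bm{\rho}}^k:=\tilde u_0\iprod{\bm{\zeta}}^k$ (which are $\tilde u_0$-horizontal $1$-forms because $\tilde u_0\iprod\tilde u_0\iprod(\cdot)=0$), the derivation property of $\iprod$ gives $\tilde u_0\iprod({\bm{\xi}}_k\wedge{\bm{\zeta}}^k)={\bm{\mu}}_k\wedge{\bm{\zeta}}^k+{\bm{\xi}}_k\wedge{\bm{\rho}}^k$, so the whole statement reduces to the vanishing of this $3$-form.

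The core computation is to contract both hypotheses with $\tilde u_0$. Using $\tilde u_0\iprod\ee^j=0$ and the (anti)derivation rule, contracting $\varepsilon_{ijk}\ee^j\wedge{\bm{\xi}}^k-\bd\bm\phi\wedge{\bm{\xi}}_i=0$ produces $p\,{\bm{\xi}}_i=\bd\bm\phi\wedge{\bm{\mu}}_i-\varepsilon_{ijk}\ee^j\wedge{\bm{\mu}}^k$, where $p:=\tilde u_0\iprod\bd\bm\phi$. Wedging this with ${\bm{\zeta}}^i$ and summing, I expect the term $-\varepsilon_{ijk}\ee^j\wedge{\bm{\mu}}^k\wedge{\bm{\zeta}}^i$ to become, after moving the $1$-form ${\bm{\mu}}^k$ past $\ee^j$, the expression ${\bm{\mu}}^k\wedge(\varepsilon_{ijk}\ee^j\wedge{\bm{\zeta}}^i)$; the inner factor is, via the \emph{second} hypothesis $\varepsilon_{kji}\ee^j\wedge{\bm{\zeta}}^i=-\bd\bm\phi\wedge{\bm{\zeta}}_k$ together with $\varepsilon_{ijk}=-\varepsilon_{kji}$, exactly $\bd\bm\phi\wedge{\bm{\zeta}}_k$. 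The two resulting terms then cancel and yield
\[
p\,({\bm{\xi}}_k\wedge{\bm{\zeta}}^k)=0 .
\]
This is the heart of the argument and settles the lemma on the open set where $p\neq0$.

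The main obstacle is the locus $\{p=0\}$, where $p$ gives no information and the $4$-form must vanish for other reasons. There $\bd\bm\phi$ is $\tilde u_0$-horizontal, so the contraction above collapses to $\varepsilon_{ijk}\ee^j\wedge{\bm{\mu}}^k=\bd\bm\phi\wedge{\bm{\mu}}_i$, i.e.\ ${\bm{\mu}}^k$ obeys the \emph{same} equation as ${\bm{\xi}}^k$ but as a triple of horizontal $1$-forms; likewise $\varepsilon_{ijk}\ee^j\wedge{\bm{\rho}}^k=-\bd\bm\phi\wedge{\bm{\rho}}_i$. I would then prove the elementary sublemma that, for horizontal $1$-forms, the pointwise linear map ${\bm u}^k\mapsto\varepsilon_{ijk}\ee^j\wedge{\bm u}^k\mp\bd\bm\phi\wedge{\bm u}_i$ is invertible: expanding ${\bm u}^k=u^{kl}\ee_l$, writing $\bd\bm\phi=\alpha_l\ee^l$, and dualising with $\mathsf{vol}_t$ turns the condition into a linear system whose only solution is $u^{kl}=0$ (by internal $SO(3)$-covariance one may align $\bd\bm\phi$ with $\ee^1$, and the relevant determinant is controlled by the nowhere-vanishing factor $1+\alpha^2$). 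Hence ${\bm{\mu}}^k={\bm{\rho}}^k=0$ on $\{p=0\}$, so ${\bm{\mu}}_k\wedge{\bm{\zeta}}^k+{\bm{\xi}}_k\wedge{\bm{\rho}}^k=0$ there, exactly as in the standard Husain--Kucha\v{r} case handled through Lemma \ref{int_prods}. Combining the two complementary cases gives ${\bm{\xi}}_k\wedge{\bm{\zeta}}^k=0$ everywhere. I expect reconciling the slick contraction with this degenerate locus to be the only delicate point; the remainder is bookkeeping with interior products and $\varepsilon$-identities.
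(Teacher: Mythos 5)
Your proof is correct, and it takes a genuinely different route from the paper's. The paper's own proof expands ${\bm{\xi}}^k$, ${\bm{\zeta}}^k$ and $\bd\bm\phi$ in the coframe $(\bd t,\ee^i)$, observes that the resulting component equations are formally the constraints \eqref{LL1}--\eqref{LL2} and the vector-field equations \eqref{equ_W}--\eqref{equ_E} under the dictionary $X_{ji}\mapsto\mathbb{B}_{ji}$, $Z_{ji}\mapsto\mathbb{F}_{ji}$, $e_{\mathrm{t}i}\mapsto 0$, $Z_\phi\mapsto 2\alpha_0$, and then reduces the claim to the identity \eqref{tercera_ec} already verified in Section \ref{sec_vector_fields}; it thus works uniformly in $\alpha_0$ but inherits the ``tedious but straightforward'' computation done there. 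You instead contract the hypotheses with $\tilde u_0$ and split on $p:=\tilde u_0\iprod\bd\bm\phi$. Your cancellation on $\{p\neq0\}$ checks out: contracting gives $p\,{\bm{\xi}}_i=\bd\bm\phi\wedge{\bm{\mu}}_i-\varepsilon_{ijk}\ee^j\wedge{\bm{\mu}}^k$, and wedging with ${\bm{\zeta}}^i$ while using $\varepsilon_{ijk}\ee^j\wedge{\bm{\zeta}}^i=\bd\bm\phi\wedge{\bm{\zeta}}_k$ (the second hypothesis combined with $\varepsilon_{ijk}=-\varepsilon_{kji}$) makes the two terms cancel, giving $p\,{\bm{\xi}}_k\wedge{\bm{\zeta}}^k=0$ with no constraint algebra whatsoever. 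Your sublemma for $\{p=0\}$ is also true: aligning $\alpha_i$ with one internal axis, the homogeneous system forces $(1+\alpha^2)\mu_{22}=-(1+\alpha^2)\mu_{22}$, hence $\mu_{ij}=0$, so ${\bm{\mu}}^k={\bm{\rho}}^k=0$ and the contracted $4$-form vanishes there; this is exactly the $\alpha\neq0$ generalization of Lemma \ref{int_prods}, and the same invertibility (with the same $1+\alpha^2$ factor) that underlies the paper's solution formulas \eqref{solution_W}--\eqref{solution_E}. The pointwise case split is legitimate since all conditions are algebraic at each point. In short: the paper's proof is economical because \eqref{tercera_ec} had to be verified anyway for the tangency analysis, whereas yours is self-contained, avoids the component-matching dictionary, and exposes the fact that the result is essentially automatic wherever $\bd\bm\phi$ has a nonvanishing $\bd t$-component, the only nontrivial locus being $\{p=0\}$.
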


\begin{proof}

Let us expand
\begin{align*}
  {\bm{\xi}}^k & =\xi^k_{\phantom{k}\ell}\bd t\wedge \ee^\ell+X^k_{\phantom{k}\ell}\,\varepsilon^{\ell pq}\ee_p\wedge\ee_q\,, \\
  {\bm{\zeta}}^k & =\zeta^k_{\phantom{k}\ell}\bd t\wedge \ee^\ell+Z^k_{\phantom{k}\ell}\,\varepsilon^{\ell pq}\ee_p\wedge\ee_q\,, \\
  \bd\bm\phi & =\alpha_0\bd t+\alpha_i\ee^i,
\end{align*}
with $\xi^k_{\phantom{k}\ell}\,,\zeta^k_{\phantom{k}\ell}\,,\alpha_0\,,X^k_{\phantom{k}\ell}\,,Z^k_{\phantom{k}\ell}\,,\alpha_i\in C^\infty(\mathcal{M})$. Thus, we can write the first condition as
\begin{align*}
    0 = & \,\varepsilon_{ijk}\xi^k_{\phantom{k}\ell}\ee^j\wedge\bd t\wedge\ee^\ell + \varepsilon_{ijk}X^k_{\phantom{k}\ell}\,\varepsilon^{\ell pq}\ee^j\wedge\ee_p\wedge\ee_q - \alpha_0\,\varepsilon^{\ell pq}X_{i\ell}\bd t\wedge\ee_p\wedge\ee_q - \alpha_j\xi_{i\ell}\ee^j\wedge\bd t\wedge\ee^\ell\\
    & - \alpha_j X_{i\ell}\,\varepsilon^{\ell pq}\ee^j\wedge\ee_p\wedge\ee_q.
\end{align*}
Left-wedge-multiplying the previous expression by $\bd t$ leads to
\begin{equation}\label{appendix_lemma4_1}
\varepsilon_{ijk}X^{jk} + X_{ij}\alpha^j = 0,
\end{equation}
and left-wedge-multiplying it by $\ee_m$ leads to
\begin{equation}\label{appendix_lemma4_2}
    \xi_{ij} - \delta_{ij}\xi^k_{\phantom{k}k} + \varepsilon_{ik\ell}\alpha^k\xi_j^{\phantom{\ell}\ell} = 2\alpha_0 X_{ji}.
\end{equation}
In a completely analogous way, we get the conditions
\begin{align}
    &\varepsilon_{ijk}Z^{jk} - Z_{ij}\alpha^j = 0,
    \label{appendix_lemma4_3}\\
    &\zeta_{ij} - \delta_{ij}\zeta^k_{\phantom{k}k} - \varepsilon_{ik\ell}\alpha^k\zeta_j^{\phantom{j}\ell} = -2\alpha_0Z_{ji}.
    \label{appendix_lemma4_4}
\end{align}
If we compare now \eqref{appendix_lemma4_1} and \eqref{appendix_lemma4_3} with the constraints \eqref{LL1} and \eqref{LL2} we see that they are formally equal if we replace $X_{ji}\mapsto\mathbb{B}_{ji}$ and $Z_{ji}\mapsto\mathbb{F}_{ji}$. Furthermore, equations \eqref{appendix_lemma4_2} and \eqref{appendix_lemma4_4} become \eqref{equ_W} and \eqref{equ_E} under the replacements $e_{\mathrm{t}i}\mapsto0$, $Z_\phi\mapsto2\alpha_0$, $\xi_{ij}\mapsto E_{ij}$, and $\zeta_{ij}\mapsto W_{ij}$. Finally,
\begin{equation*}
  {\bm{\xi}}_k\wedge{\bm{\zeta}}^k=(\xi_{ij}Z^{ij} + \zeta_{ij}X^{ij})\mathsf{vol}_t,
\end{equation*}
hence, we need to prove that $\xi_{ij}Z^{ij} + \zeta_{ij}X^{ij} = 0$. Under the replacements introduced above, we see that this last condition is equivalent to proving \eqref{tercera_ec}. As a consequence, the same computation showing that \eqref{tercera_ec} holds leads to the result that we are set to prove here. As mentioned in Section \ref{sec_vector_fields}, this is straightforward, although it requires some patience if performed by hand.

\end{proof}

\subsection{\texorpdfstring{Uniqueness of $\xi$}{Uniqueness of \textbackslash xi}}\label{uniqueness_xi}

Let us define the vector fields $\xi$ and $\hat{\xi}$ as the solutions to the equations $\xi\iprod e^i= e_{\mathrm{t}}^i$ and ${\hat{\xi}}\iprod h^i=\widehat{e}_{\mathrm{t}}^{\,\,i}$ respectively. These can be written in terms of dual objects as
\begin{equation*}
  \xi=e_{\mathrm{t}i}\left(\frac{\cdot\wedge\frac{1}{2}\varepsilon^{ijk}e_j\wedge e_k}{\mathsf{vol}_e}\right)\,,\quad \hat{\xi}=\frac{1}{1+\alpha^2}\widehat{e}_{\mathrm{t}i}\left(\frac{\cdot\wedge\left(\frac{1}{2}\varepsilon^{ijk}e_j\wedge e_k+\alpha^j e^i\wedge e_j\right)}{\mathsf{vol}_e}\right)\,.
\end{equation*}
Expanding now, we find
\[
\frac{1}{1+\alpha^2}\widehat{e}_{\mathrm{t}i}\left(\frac{1}{2}\varepsilon^{ijk}e_j\wedge e_k+\alpha^j e^i\wedge e_j\right)=\frac{1}{2}\varepsilon^{ijk}e_{\mathrm{t}i}e_j\wedge e_k\,,
\]
where, in the course of the computations, one has to make use of
\begin{align*}
  \frac{1}{2}(e_{\mathrm{t}}\cdot\alpha)\varepsilon_{ijk}\alpha^ie^j\wedge e^k&=\frac{1}{2}e_{\mathrm{t}}^\ell(\varepsilon_{ijk}\alpha_\ell)\alpha^ie^j\wedge e^k=\frac{1}{2}e_{\mathrm{t}}^\ell(\varepsilon_{jk\ell}\alpha_i-\varepsilon_{k\ell i}\alpha_j+\varepsilon_{\ell ij}\alpha_k)\alpha^ie^j\wedge e^k\\
  &=\frac{1}{2}\alpha^2\varepsilon_{ijk}e_{\mathrm{t}}^ie^j\wedge e^k+\varepsilon_{ijk}e_{\mathrm{t}}^i\alpha^j e^k\wedge \mathrm{d}\phi\,.
\end{align*}

\section{\texorpdfstring{Some details on the GNH analysis of the Extended Husain-Kucha\v{r} action}{Some details on the GNH analysis of the Extended Husain-Kucha\v{r} action}}\label{appendix_details_EHK}


In order to guarantee the consistency of the dynamics defined by the vector fields satisfying \eqref{E1}-\eqref{E3} it is necessary to show that they are tangent to the submanifold $\mathsf{M}_1\subset\mathsf{M}_0$ defined by the secondary constraints \eqref{L1}-\eqref{L2}. To this end, we first have to solve \eqref{E1}-\eqref{E3} for $Z_e^i$, $Z_A^i$ and $Z_\phi$. Before that it is convenient to study with some care the secondary contraints of the EHK model.

%
%
\subsection{Constraints}\label{sec_constraints}

Throughout this section we use the objects defined to study the standard HK model, in particular $\mathbb{F}_{ij}$ and $\mathbb{B}_{ij}$. We also expand $\mathrm{d}\phi=\alpha_i e^i$. By doing this we see that the constraints \eqref{L1} and \eqref{L2} are equivalent to
\begin{align}
  & \mathbb{F}_{ij}\alpha^j-\varepsilon_{ijk}\mathbb{F}^{jk}=0\,, \label{LL1}\\
  & \mathbb{B}_{ij}\alpha^j+\varepsilon_{ijk}\mathbb{B}^{jk}=0\,. \label{LL2}
\end{align}
We write now $\mathbb{F}_{ij}$ and $\mathbb{B}_{ij}$ in terms of a symmetric and traceless part, a pure trace and an antisymmetric part according to
\begin{align}
&\mathbb{F}_{ij}\,=\mathbb{S}_{ij}+\frac{1}{3}\delta_{ij}\mathbb{F}\;+\varepsilon_{ijk}\mathbb{A}^k\,,\label{expansion1}\\
&\mathbb{B}_{ij}=\mathbb{U}_{ij}+\frac{1}{3}\delta_{ij}\mathbb{B}+\varepsilon_{ijk}\mathbb{V}^k\,,\label{expansion2}
\end{align}
with $\mathbb{S}_{[ij]}=\mathbb{U}_{[ij]}=0$ and $\mathbb{S}^i_{\phantom{i}i}=\mathbb{U}^i_{\phantom{i}i}=0$. Notice that
\[
\mathbb{F}=\mathbb{F}^i_{\phantom{i}i}\,,\quad \mathbb{B}=\mathbb{B}^i_{\phantom{i}i}\,,\quad \varepsilon_{ijk}\mathbb{F}^{jk}=2\mathbb{A}_i\,,\quad \varepsilon_{ijk}\mathbb{B}^{jk}=2\mathbb{V}_i\,.
\]
Plugging \eqref{expansion1} and \eqref{expansion2} into \eqref{LL1} and \eqref{LL2} we get the following relations
\begin{align}
  (\varepsilon_{ijk}\alpha^k+2\delta_{ij})\mathbb{A}^j &=\mathbb{S}_{ij}\alpha^j\,+\frac{1}{3}\alpha_i\mathbb{F}\,,\label{antisymm1}\\
  (\varepsilon_{ijk}\alpha^k-2\delta_{ij})\mathbb{V}^j &=\mathbb{U}_{ij}\alpha^j+\frac{1}{3}\alpha_i\mathbb{B}\,.\label{antisymm2}
\end{align}
They can be solved for $\mathbb{A}_i$ and $\mathbb{V}_i$ in terms of $(\mathbb{S}_{ij},\mathbb{S})$ and $(\mathbb{U}_{ij},\mathbb{U})$ respectively. The result is
\begin{align}
   \mathbb{A}_i=&{\phantom{-}\,\;}\frac{1}{2(4+\alpha^2)}(4\delta_{ij}+\alpha_i\alpha_j-2\varepsilon_{ijk}\alpha^k)\mathbb{S}^j_{\,\,\ell}\alpha^\ell+\frac{1}{6}\alpha_i \mathbb{F}\,, \label{antisymm01}\\
   \mathbb{V}_i=&-\frac{1}{2(4+\alpha^2)}(4\delta_{ij}+\alpha_i\alpha_j+2\varepsilon_{ijk}\alpha^k)\mathbb{U}^j_{\,\,\ell}\alpha^\ell-\frac{1}{6}\alpha_i \mathbb{B}\,, \label{antisymm02}
\end{align}
where $\alpha^2:=\alpha_i \alpha^i$. Plugging \eqref{antisymm01}-\eqref{antisymm02} into \eqref{expansion1}-\eqref{expansion2} gives $\mathbb{F}_{ij}$ and $\mathbb{B}_{ij}$ in terms of the arbitrary objects $(\mathbb{S}_{ij},\mathbb{F})$ and $(\mathbb{U}_{ij},\mathbb{B})$. Notice, however, that in order to fully solve the secondary constraints one has to find $A_i$ and $e_i$ in terms of $(\mathbb{S}_{ij},\mathbb{F})$ and $(\mathbb{U}_{ij},\mathbb{B})$.

%
%
\subsection{The Hamiltonian vector fields}\label{sec_vector_fields}

A convenient way to solve equations \eqref{E1}-\eqref{E3} for the Hamiltonian vector fields is to put them in the form \eqref{EE1}-\eqref{EE3}. These equations are similar to those of \cite{BarberoG:2023qih} and can be solved by using the methods explained there. Hence, we only give the solution here. As the dynamics takes place in $\mathsf{M}_1$, we use that the secondary constraints are satisfied whenever necessary. By writing $X_A^i=W^i_{\phantom{i}j}e^j$ and $X_e^i=E^i_{\phantom{i}j}e^j$, equations \eqref{EE1} and \eqref{EE2} become
\begin{align}
  &W_{ij} -\delta_{ij}W^k_{\phantom{k}k}-\varepsilon_i^{\phantom{i}k\ell}\alpha_kW_{j\ell}+Z^+_{jk}\mathbb{F}^k_{\phantom{k}i}=0\,, \label{equ_W}\\
  &E_{ij} \,-\delta_{ij\,}E^k_{\phantom{k}k}\,+\varepsilon_i^{\phantom{i}k\ell}\alpha_k\,E_{j\ell}\,+Z^-_{jk}\mathbb{B}^k_{\phantom{k}i}=0\,, \label{equ_E}
\end{align}
where
\[
Z^{\pm}_{ik}:=\varepsilon_{ijk}e_{\mathrm{t}}^j\pm \delta_{ik}Z_\phi\,.
\]
The solutions to \eqref{equ_W} and \eqref{equ_E} can be written as
\begin{align}
  W_{ij} &=\frac{1}{1+\alpha^2}(\delta_i^{\phantom{i}k}+\alpha_i\alpha^k-\varepsilon_i^{\phantom{i}k\ell}\alpha_\ell)M_{kj},\label{solution_W} \\
  E_{ij}\, &=\frac{1}{1+\alpha^2}(\delta_i^{\phantom{i}k}+\alpha_i\alpha^k+\varepsilon_i^{\phantom{i}k\ell}\alpha_\ell)\,N_{kj},\label{solution_E}
\end{align}
where
\begin{align}
M_{ij}&=\phantom{-}e_{\mathrm{t}}^k\mathbb{F}_{ki}\alpha_j-\varepsilon_{jpq}e_{\mathrm{t}}^p\mathbb{F}^q_{\phantom{q}i}-Z_\phi\mathbb{F}_{ji}+\mathbb{F}\left(\frac{1}{2}\delta_{ij}Z_\phi
+\frac{1}{2}\delta_{ij}(e_{\mathrm{t}}\cdot\alpha)-e_{\mathrm{t}i}\alpha_j\right)\,,\label{M}\\
N_{ij}&=-e_{\mathrm{t}}^k\mathbb{B}_{ki}\alpha_j-\varepsilon_{jpq}e_{\mathrm{t}}^p\mathbb{B}^q_{\phantom{q}i}+Z_\phi\mathbb{B}_{ji}\!-\mathbb{B}\left(\frac{1}{2}\delta_{ij}Z_\phi
+\frac{1}{2}\delta_{ij}(e_{\mathrm{t}}\cdot\alpha)-e_{\mathrm{t}i}\alpha_j\right)\,.\label{N}
\end{align}
Here we use the obvious notation $e_{\mathrm{t}}\cdot\alpha:=e_{\mathrm{t}}^i\alpha_i$. For future reference, we note that
\begin{equation}\label{wi}
\varepsilon_{ijk}W^{jk}=e_{\mathrm{t}}^j\mathbb{F}_{ji}-e_{\mathrm{t}i}\mathbb{F}\,.
\end{equation}
In terms of the objects introduced above, equation \eqref{EE3} becomes
\begin{equation}\label{tercera_ec}
\mathbb{B}_{ij}W^{ij}\!+\mathbb{F}_{ij}E^{ij}=0\,.
\end{equation}
If we now plug \eqref{solution_W}-\eqref{solution_E} into \eqref{tercera_ec} and make use of the secondary constraints \eqref{LL1}-\eqref{LL2}, we can check that this equation is identically satisfied. Hence, the $Z_\phi$ are left arbitrary. In this case, the computation is easier than a similar one appearing in the treatment of the self-dual action \cite{BarberoG:2023qih}. The simplest way to perform it is to use \eqref{LL1}-\eqref{LL2} to replace $\mathbb{F}_{ij}\alpha^j$, $\mathbb{B}_{ij}\alpha^j$ by $2\mathbb{A}_i$ and  $-2\mathbb{V}_i$, whenever possible, and then write everything in terms of $\mathbb{S}_{ij}$, $\mathbb{F}$, $\mathbb{A}_i$, $\mathbb{U}_{ij}$, $\mathbb{B}$ and $\mathbb{V}_i$.

%
%
\subsection{Tangency analysis}\label{sec_tangency}

In the GNH framework, the consistency of the Hamiltonian dynamics is tantamount to the tangency  to the submanifold $\mathsf{M}_1$ [defined by the secondary constraints \eqref{L1}-\eqref{L2}] of the Hamiltonian vector fields given by the solutions to \eqref{E1}-\eqref{E3}. The tangency conditions in the present case are

\begin{align}
  & \Big(\varepsilon_{ijk}Z_e^j+\delta_{ik}\mathrm{d}Z_\phi\Big)\wedge F^k+\Big(\varepsilon_{ijk}e^j+\delta_{ik}\mathrm{d}\phi\Big)\wedge DZ_A^k=0\,, \label{tan1}\\
  & D\Big(\varepsilon_{ijk}e^j\wedge Z_e^k-\mathrm{d}Z_\phi\wedge e_i-\mathrm{d}\phi\wedge Z_{ei}\Big)+Z_A^j\wedge\Big(e_i\wedge e_j-\varepsilon_{ijk}\mathrm{d}\phi\wedge e^k\Big)=0\,.\label{tan2}
\end{align}

By following a procedure similar to the one used in the tangency analysis presented in \cite{BarberoG:2023qih}, it is possible to remove the derivatives of $Z_A^i$ and $Z_e^i$ from \eqref{tan1}-\eqref{tan2} and rewrite these tangency conditions on $\mathsf{M}_1$ in the form
\begin{align}
  & \varepsilon_{ijk}X_e^j\wedge F^k+\varepsilon_{ijk}De^j\wedge X_A^k=0\,,\label{tan01}\\
  & (\delta_{ij}e_k-\varepsilon_{ijk}\mathrm{d}\phi)\wedge e^j\wedge X_A^k+(Z_\phi\delta_{ik}+\varepsilon_{ijk}e_{\mathrm{t}}^j)\mathrm{d}\phi\wedge F^k+e_{\mathrm{t}j}e_i\wedge F^j-e_{\mathrm{t}i} F_j\wedge e^j=0\,.\label{tan02}
\end{align}

To check that \eqref{tan02} holds on $\mathsf{M}_1$, we left-wedge multiply by $\mathrm{d}\phi$ the equation \eqref{EE1} that had to be solved in order to find $X_A^i$. This gives
\[
\varepsilon_{ijk}\mathrm{d}\phi\wedge e^j\wedge X_A^k=(\varepsilon_{ijk}e_{\mathrm{t}}^j+\delta_{ik}Z_\phi)\mathrm{d}\phi\wedge F^k\,.
\]
Plugging the previous expression into the tangency condition \eqref{tan02}, we get
\[
e_k\wedge e_i\wedge X_A^k+e_{\mathrm{t}}^j e_i\wedge F_j-e_{\mathrm{t}i} e_j\wedge F^j=0\,.
\]
Now, it is straightforward to see that this condition holds because, as a consequence of \eqref{wi}, we have
\[
e_k\wedge X_A^k=W^{ij}e_i\wedge e_j=\frac{1}{2}(e_{\mathrm{t}}^i\mathbb{F}_{ij}-e_{\mathrm{t}j}\mathbb{F})\varepsilon^{jk\ell} e_k\wedge e_\ell\,.
\]
To check that the tangency condition \eqref{tan01} holds we proceed as we did before with \eqref{tercera_ec}: first write \eqref{tan01} in the equivalent form
\[
\varepsilon_{ijk}\mathbb{B}^{j\ell}W^k_{\phantom{k}\ell}-\varepsilon_{ijk}\mathbb{F}^{j\ell}E^k_{\phantom{k}\ell}=0\,,
\]
use \eqref{LL1}-\eqref{LL2} to replace $\mathbb{F}_{ij}\alpha^j$, $\mathbb{B}_{ij}\alpha^j$ by $2\mathbb{A}_i$ and  $-2\mathbb{V}_i$, and then write everything in terms of $\mathbb{S}_{ij}$, $\mathbb{F}$, $\mathbb{A}_i$, $\mathbb{U}_{ij}$, $\mathbb{B}$ and $\mathbb{V}_i$. The computations are a bit tedious but straightforward.

We end by insisting on the importance of checking tangency conditions. If they had failed to hold, new secondary constraints and/or restrictions on the objects that remained arbitrary at this stage could have appeared. A careful consistency analysis of any new conditions arising in this would have then been necessary.

\end{appendices}

%
%

\end{document}